\definecolor{darkblue}{rgb}{0,0,0.5}
\newif\if@restonecol
\newcommand{\ea}{\emph{et al.}\xspace}
\newcommand{\ie}{\emph{i.e.,}\xspace}
\newcommand{\eg}{\emph{e.g.,}\xspace}
\newcommand{\sys}{TorPolice\xspace}
\newcommand{\first}{\textsf{(i)}\xspace}
\newcommand{\second}{\textsf{(ii)}\xspace}
\newcommand{\third}{\textsf{(iii)}\xspace}
\newcommand{\four}{\textsf{(iv)}\xspace}
\newcommand{\five}{\textsf{(v)}\xspace}
\DeclarePairedDelimiter\floor{\lfloor}{\rfloor}
\renewcommand\footnotetextcopyrightpermission[1]{} % removes footnote with conference information in first column
\newcommand{\paraspace}{\vspace{0.03in}}
\newcommand{\parab}[1]{\paraspace\noindent{\bf#1}}
\def\@copyrightspace{\relax}
\def\thm@space@setup{%
	\thm@preskip=2pt
	\thm@postskip=\thm@preskip % or whatever, if you don't want them to be equal
}
\begin{document}
\title{\sys: Towards Enforcing Service-Defined Access Policies \\ in Anonymous Systems}
\titlenote{The initial version is published in IEEE ICNP 2017~\cite{icnpversion}, titled ``\sys: Towards Enforcing Service-Defined Access Policies for Anonymous Communication in the Tor Network''.}
\author{
	Zhuotao Liu$^*$,
	Yushan Liu$^\dag$,
	Philipp Winter$^\dag$,
	Prateek Mittal$^\dag$,
	Yih-Chun Hu$^*$
}
\affiliation{%
	\institution{$^*$ University of Illinois at Urbana-Champaign, $^\dag$ Princeton University}
}
\affiliation{%
	\institution{$^*$ \{zliu48,yihchun\}@illinois.edu, $^\dag$ \{yushan, pwinter, pmittal\}@princeton.edu}
}

\maketitle

\section{Abstract}
Tor is the most widely used anonymity network,
currently serving millions of users each day.
However, there is no access control in place for all these users,
leaving the network vulnerable to botnet abuse and attacks.
For example, criminals frequently use exit relays as stepping stones
for attacks, causing service providers to serve CAPTCHAs to
exit relay IP addresses or blacklisting them altogether,
which leads to severe usability issues for legitimate Tor users.
To address this problem, we propose \sys, the first
privacy-preserving access control framework for Tor.
\sys enables abuse-plagued service providers such as Yelp
to enforce access rules to police and throttle malicious requests coming from
Tor while still providing service to legitimate Tor users.
Further, \sys equips Tor with global access control for
relays, enhancing Tor's resilience to botnet abuse.
We show that \sys preserves the privacy of Tor users,
implement a prototype of \sys, and perform
extensive evaluations to validate our design goals. 
\section{Introduction}
In an era of mass surveillance, our online communications are being increasingly monitored 
by businesses and government entities to infer sensitive information. 
Technologies for anonymous communication aim to hide users' network identity (IP address) from 
untrusted destinations, as well as third parties on the Internet~\cite{tor,I2P,onion1,onion2}.  
Counting almost two million daily users, the Tor network~\cite{tor} is among the most
popular digital privacy tools. As of May 2017, the network consists
of over 7,000 volunteer-run relays, carrying nearly 100 Gbps of traffic~\cite{tor_metric}.
Tor clients\footnote{In this paper, we use the term client(s) to refer to the onion proxy (OP) software 
running on the Tor user's machine.} build a path (also known as Tor circuit) consisting 
of three relays (guard, middle and exit) to reach service providers such as Yelp or Wikipedia.
Tor is used by law enforcement, intelligence agencies, political dissidents, journalists, whistle-blowers, 
businesses, and ordinary citizens to enhance their online privacy~\cite{tor_users}. 

Today's Tor network does not implement any access control mechanism, meaning that anyone with a Tor client can use
the network without limitation. While the lack of access control fosters network growth, it has also caused various
problems, most importantly botnet abuse~\cite{hopper2013protecting}.  In practice, botnets
use Tor to attack third-party services, spam comment sections on websites, scrape content, and scan services for
vulnerabilities~\cite{cloudflare-trouble}.  In response, many service providers and content delivery networks (CDNs)
have started to treat Tor users as ``second-class" Web citizens~\cite{diff_treatment}, by either forcing Tor users to solve
numerous CAPTCHAs~\cite{cloudflare-trouble,tor-trouble} or blocking Tor exit relay IP addresses altogether. 
%(\eg Akamai and its powered sites).

Another type of botnet-related abuse of Tor arises from command and control (C\&C) servers run as Tor onion
services (used to be known as hidden services)~\cite{botnet_abuse_1,botnet_abuse_2,botnet_abuse_3}. 
In the past, such events caused a rapid spike in the number
of Tor clients~\cite{cc_abuse, hidden_service_abuse}. Besides the reputational issue of Tor ``hosting'' botnet
infrastructure, the massive number of circuit creation requests from botnets is a heavy burden on Tor relays, causing 
significant performance degradation for legitimate Tor users (\eg frequent Tor circuit failures).  Other types of botnet
abuse include paralyzing Tor relays via relay flooding attacks~\cite{cellflood,sdos} and performing large-scale
traffic analysis via throughput or congestion fingerprinting~\cite{mittal:ccs11,murdoch:sp05}.

\parab{Contributions.} In this paper, we present \sys, the first privacy-preserving
access control framework for the Tor network.  Leveraging 
cryptographically computed \emph{network capabilities}, \sys enables  
service providers to define access policies for Tor connections, allowing them to 
throttle Tor-emitted abuse while still serving legitimate Tor users. Thus, \sys offers
a more viable alternative to abuse-plagued service providers than simply blocking all Tor connections.  
Further, \sys improves the Tor network's resilience to various botnet abuses by enabling global access control for Tor  relays. 
Crucially, \sys achieves these benefits while still retaining Tor's anonymity guarantees.

\sys's design introduces a set of fully distributed and partially trusted  
\emph{access authorities} (AAs) to manage and certify capabilities. 
To request capabilities from AAs, Tor clients must first obtain anonymous capability seeds 
which are types of resources that are costly to scale.  Both service providers and the Tor network 
provide differentiated service to Tor clients that possess valid capabilities so 
to enforce self-defined access rules. The AAs generate capabilities using blind signatures~\cite{blind}
to break the linkability between capability requesting and capability spending. 
We conduct a rigorous security analysis to prove that \sys does not weaken privacy 
guarantees offered by the current Tor network. 

We implement a prototype of \sys to demonstrate its practicality and evaluate the 
prototype extensively on our testbed, in the Shadow simulator~\cite{shadow}, via simulations and 
over the live Tor network. Our results show that \sys can effectively enforce 
service-selected access policies and mitigate large-scale botnet abuses against Tor at the cost
of negligible overhead.

We structure the rest of our paper as follows. We begin by outlining our problem
statement in \S~\ref{sec:problem_formulation}, followed by a design overview in
\S~\ref{sec:design_overview}.  The details behind access authorities are in
\S~\ref{sec:AAs}.  Next, \S~\ref{sec:serviceCapability} discusses how \sys can
control site access while \S~\ref{sec:TorCapability} discusses how we can
control access to the network itself. We analyze \sys's effect on Tor's
anonymity in \S~\ref{sec:security_analysis}, discuss its implementation in
\S~\ref{sec:implementation}, and evaluate it in \S~\ref{sec:evaluation}.
Finally, we present related work in \S~\ref{sec:related} and conclude our work
in \S~\ref{sec:conclude}. 
\section{Problem Formulation}\label{sec:problem_formulation}
In this section, we provide brief background on the Tor network (\S~\ref{sec:Tor_background}), outline \sys's design goals
(\S~\ref{sec:design_overview_goal}), and discuss our threat model (\S~\ref{sec:design_overview_threat}).

\subsection{Tor Background}\label{sec:Tor_background}
Tor clients anonymously connect to service providers (\eg WikiLeaks) by building three-hop circuits consisting of a guard, middle, 
and exit relay. Tor's use of layered encryption ensures that each relay only knows the identities of its direct neighbors 
(\ie the previous and next hop in the circuit). Clients randomly select these relays, weighted by the relays' bandwidth 
and their positions on the circuit. A list of all Tor relays---the network consensus---is published  
hourly by a set of nine globally-distributed directory authorities that are run  by volunteers trusted by the Tor Project.
While the directory authorities and guard relays learn a Tor client's network identity (\ie her IP address), 
they cannot observe the client's online activity. Exit relays, however, can monitor the client's 
activity, but do not know her identity.  Tor's anonymity stems from unlinking network identity from activity.

Besides client-side anonymity, Tor allows service providers to host their service anonymously
over Tor onion services (OS). Once an OS is set up, it creates circuits to at least three
relays severing as its \emph{introduction points} (IPs). Then, the OS publishes its \emph{descriptor}---which contains the IPs---to a 
distributed hash table that consists of a subset of all Tor relays. To connect to the OS, a Tor client first 
fetches the OS's descriptor using its \emph{onion address}, and then builds two 
circuits: one to an IP and another one to a randomly-selected relay called the 
\emph{rendezvous point} (RP). The client instructs the IP to send the identity 
of the RP to the OS, which then  creates a circuit to the RP to be able to finally communicate with the client.

\subsection{Design Goals}\label{sec:design_overview_goal}
\sys adds access control to the anonymous communication in Tor, 
benefiting both service providers and the Tor network.  
Different from prior capability based schemes~\cite{SIFF,tva,portcullis,netfence,MiddlePolice}, 
\sys's design needs to address a unique combination of the following three challenges: 
\first preserving Tor's anonymity guarantees, 
\second avoiding central points of control, and \third being incrementally deployable. 

\parab{Service-defined Access Policies.} 
Project Honey Pot lists nearly 70\% of all Tor exit relays as comment
spammers~\cite{cloudflare-trouble}, causing many service providers and CDNs to
block and filter traffic originating from the Tor network.  To reduce this
tension between Tor users and service providers, \sys must allow service
providers to define and enforce access rules for Tor connections, 
allowing them to throttle Tor-emitted abuse while still serving legitimate Tor users.
\sys is a flexible framework that allows service providers to define self-desired  access policies.  

\parab{Mitigate Botnet Abuse Against Tor.} 
Being a service provider itself, the Tor network is also subject to botnet abuse, such C\&C
servers hosted as onion services, and (D)DoS attacks against (selected) relays. \sys
allows the Tor network to control the network usage of Tor clients, making it
possible to throttle the abuse. In contrast to local rate limiting by each relay, 
\sys's access control mechanism is \emph{global}, meaning that an adversary cannot 
circumvent our defense by simply connecting to all relays. 

\parab{Preserving Tor User Privacy.}
\sys must not degrade Tor's anonymity guarantees.  While we add a new layer of
functionality to Tor (access control), this layer---like Tor itself---unlinks a client's identity
from its activity, and therefore preserves Tor users' online anonymity.

\parab{Fully Distributed and Partially Trusted Authorities.}  
In accordance with Tor's design philosophy of distributing trust, \sys relies on
a set of fully distributed and partially-trusted access authorities (AAs) to
manage capabilities.  An AA is operated either by the Tor Project, a
service provider, or a trusted third party.  Since Tor clients are free to
choose any AA to request capabilities, no single AA has a global view on all 
Tor clients.  Further, each AA is only partially trusted and a service provider 
can blacklist any misbehaving or compromised AA. 

\parab{Incrementally Deployable.} 
\sys must be incrementally deployable.  Up-to-date Tor clients, relays, and
service providers can benefit from a partially-deployed \sys immediately while
outdated entities can continue their operations.

\parab{Elided Design Goals.}
Various attacks seek to break Tor's unlinkability. For instance, an AS-level adversary may de-anonymize a Tor user's  
Internet activities if the adversary is in a position to monitor both ingress and egress traffic~\cite{raptor}.
\emph{\sys is not designed to mitigate those attacks on unlinkability}. Instead, we preserve the unlinkability 
guarantees that the Tor network currently provides.

\subsection{Adversary Model and Assumptions}
\label{sec:design_overview_threat}
We consider a Byzantine adversary that deviates from our protocol and abuses Tor in arbitrary ways.
The adversary can use Tor to abuse third-party services, \eg by scraping content, 
spamming comments, and scanning for vulnerabilities. The adversary may also 
abuse the Tor network directly, \eg by using Tor OSes as C\&C servers, performing traffic analysis,
or launching (D)DoS attacks against Tor relays. The adversary may further control  
a large number of bots, and hence a significant amount of resources.
The bots can act passively (\eg monitor Tor traffic) or actively 
(\eg spoof and manipulate packets). 

We assume that the AAs are well-connected to the Internet
backbone so that volumetric DDoS attacks against the whole set of AAs can be mitigated. 
Tor's existing directory authorities are subject to the same assumption. In practice, 
one way to assure this assumption is relying on DDoS prevention vendors~\cite{MiddlePolice}. 

\begin{figure}[t]
\centering
\mbox{
\subfigure{\includegraphics[width=\linewidth]{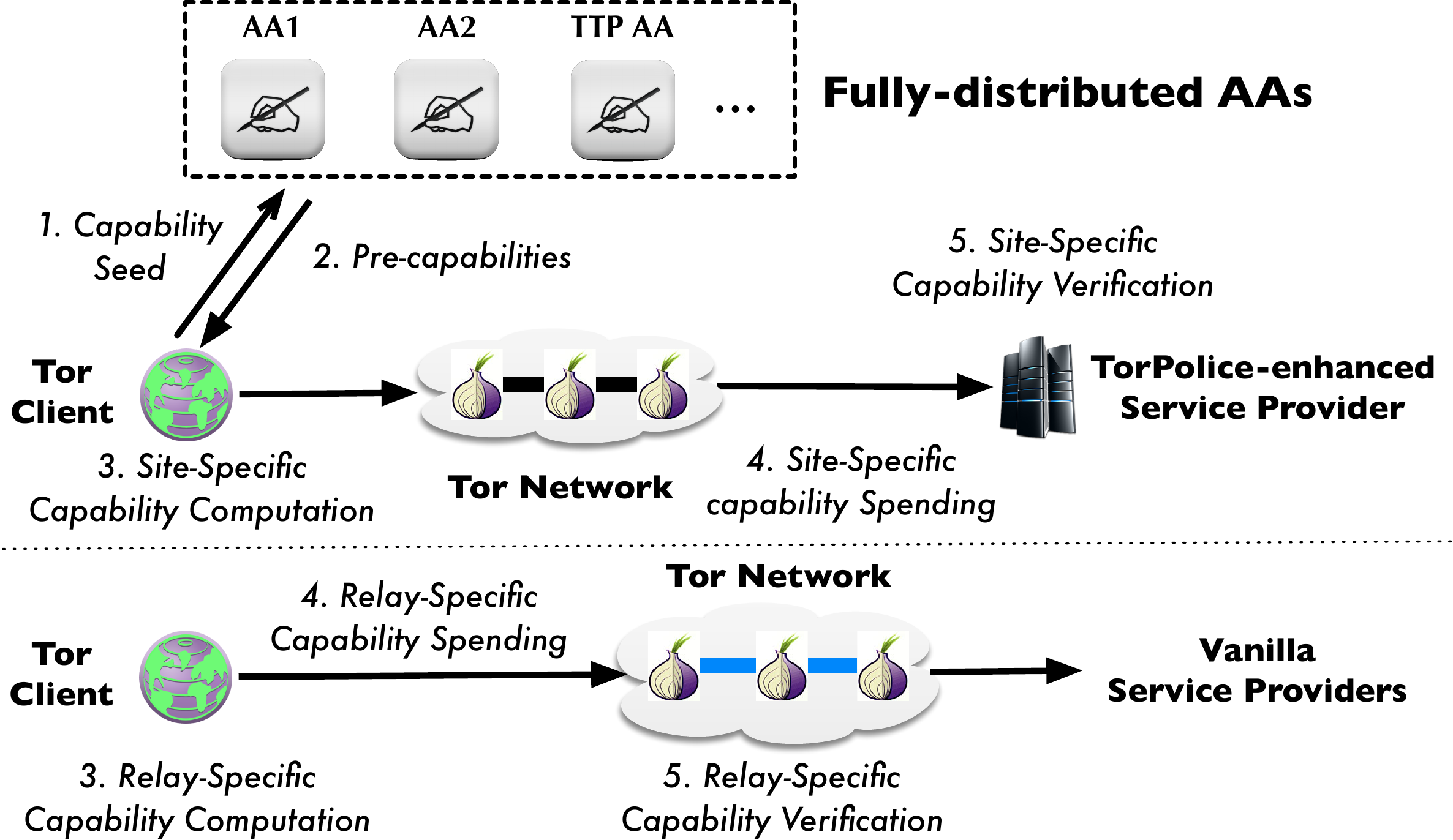}}}
\caption{The architecture of \sys. A Tor client (step 1) anonymously sends its capability seed 
to an AA to request pre-capabilities (step 2), based on which the client computes site(relay)-specific 
capabilities (step 3). The client then spends capabilities on either service access or Tor circuit creation (step 4). 
The capability recipients validate capabilities before allowing services (step 5). We intentionally  
separate two capability use cases for clear presentation.} \label{fig:sys_arch}
\end{figure}

\section{Design Overview} \label{sec:design_overview}
In a nutshell, \sys is a generic access control framework based on \emph{capabilities}. 
\sys enables both service providers and the Tor network itself to enforce 
access control on Tor clients to mitigate various types of botnet abuse 
caused by the lack of access control. To this end, we consider two types of capabilities: 
\emph{site-specific capabilities} for accessing \sys-enhanced service providers through Tor, and 
\emph{relay-specific capabilities} for creating  \sys-enhanced Tor circuits. 		
Both types of capabilities are signed by a set of fully-distributed Access Authorities (AAs) 
that are deployed either by the Tor Project, service providers, or trusted third parties. 
To request capabilities from a particular AA, a Tor client is required to possess a \emph{capability 
seed}---basically a costly-to-scale resource---accepted by the AA. Each AA accepts only a single 
type of capability seed. Since Tor clients are free to choose their AAs, no single AA has a global view on 
all Tor clients. \sys employs blind signatures~\cite{blind} to unlink the requesting and 
spending of capabilities. When requesting capabilities from an AA, Tor clients express 
what kind of capability they request because the issuing process for two capability types 
differs. An AA maintains separate signing keys and rate limiters for two capability types. 
 
Figure~\ref{fig:sys_arch} illustrates the capability requesting and spending process.
While both capability types have in common step one and two, the subsequent
steps differ.  A site-specific capability can only be spent at the service 
provider specified in the capability to request service while a relay-specific capability is spent at a 
specific Tor relay to build a \sys-enhanced circuit through the relay. 
A Tor client can use both capability types simultaneously by visiting a \sys-enhanced 
service provide through a \sys-enhanced Tor circuit. In Figure~\ref{fig:sys_arch}, 
we intentionally separate our two capability use cases for clear presentation. 
\section{The Access Authorities}\label{sec:AAs}
\sys relies on a set of fully distributed and partially trusted access authorities (AAs)
to manage network capabilities. We assume AAs to be honest-but-curious,
meaning that they follow protocol, but seek to derive additional 
information about Tor clients.  An AA can be deployed by the Tor Project,
service providers (\eg large CDNs like Cloudflare), or third parties. 
Each AA is a conceptually centralized entity. However, an AA can \emph{distribute} 
its operations among multiple servers to achieve high availability.

\subsection{Capability Seeds} 
AAs expect valid \emph{capability seeds} from Tor clients to issue \emph{pre-capabilities}, 
which are the basis for deriving spendable capabilities. For flexibility, we 
intentionally keep the definition of capability seeds broad: any resource 
that is readily available to Tor users, but costly to scale, can be adopted  
as capability seeds. Reasonable choices include proof-of-work schemes 
(\eg solutions to CAPTCHAs or computational puzzles) and anonymous 
monetary resources.  \emph{\sys does not assume that 
	capability seeds can distinguish bots from humans.}  
Rather, botnets can still obtain  more capability seeds 
than legitimate Tor users. Instead, \sys employs capability
seeds as a form of anonymous identities that enable both service providers
and the Tor network to control access by each Tor client.

In this paper, we elaborate on two types of capability seeds (\ie solutions to CAPTCHAs and computational puzzles) and 
further discuss how \sys can incorporate more types of seeds in \S~\ref{sec:extending_AAs}. 
One key challenge of using anonymous capability seeds is to ensure that clients do not have  
to solve endless challenges while browsing the web and meanwhile ensure their activities 
are unlinkable. \sys proposes a capability renewal protocol to address this 
challenge (\S~\ref{sec:site_pre_cap_computation}). 

Although CAPTCHAs can be deployed using publicly available 
libraries like Google's reCAPTCHA~\cite{reCAPTCHA}, \sys needs additional components 
to support computational puzzles. At a very high level, \sys's puzzle 
system design is similar to Portcullis~\cite{portcullis}. However, \sys's puzzle system does make 
a great improvement over Portcullis: it can explicitly 
bound the percentage of CPU cycles that any client can spend on 
solving puzzles. As a result, the puzzle system  
can bring all bots down to the percentage that normal users prefer to 
use for puzzle computation, which significantly reduce the computation disparity 
between the normal clients and bots. 
For better readability, we defer detailed design for \sys's puzzle 
system in \S~\ref{sec:appendix:puzzles}.

\subsection{Per-Seed Rate Limiting}\label{sec:rate_limiting}
Each AA accepts only one type of capability seed. The rate at which a seed can request 
pre-capabilities is limited. In particular, an AA publishes two rate limiters: one     
determines the maximum rate at which a capability seed can request pre-capabilities used 
for accessing \sys-enhanced service providers and the other one 
determines the maximum rate at which a seed can request pre-capabilities  
used for \sys-enhanced circuit creation. 
Based on these per-seed rate limiters published by all AAs, 
both service providers and Tor can 
configure a set of rules to fulfill their access policies. 
This paper presents two concrete examples. In \S~\ref{sec:site_cap_spending}, 
we elaborate on a design that enables a site to bound an adversary's 
achievable service request rate through Tor using self-defined parameters. 
In \S~\ref{sec:design_relay_specific_cap}, we present a design 
that allows Tor to prevent botnets from creating 
numerous Tor circuits to conduct various abuses. To improve 
readability, detailed settings of these rate limiters 
will be discussed when presenting these access policies.

\subsection{Key Management}
Each AA maintains two pairs of keys for signing pre-capabilities, 
and each of them is dedicated for one capability type. Each AA must publish the public 
key of both key pairs, for instance, via the Tor network consensus, 
to ensure other entities (\eg Tor clients, relays and service providers) 
can verify the AA's signatures. An AA can periodically 
renew its keys, but at any time only two key pairs from the AA are valid. 
After receiving signed pre-capabilities from an AA, Tor clients must verify that 
the AA uses proper keys before using the pre-capabilities for accessing 
service providers or Tor. This prevents a malicious AA from using more keys 
simultaneously to partition the anonymity set.  Finally, each AA is associated 
with a long-term fingerprint to uniquely identity the AA, similar to the fingerprint 
of a Tor relay.  

\subsection{Extending the Access Authorities}
\label{sec:extending_AAs}
Besides Tor, content delivery networks (\eg Cloudflare or Akamai) also have direct incentives to
deploy and control their own set of access authorities to mitigate Tor-emitted 
abuses while serving anonymous connections.  In fact, Cloudflare is working on 
an independent implementation of a system whose design goals are similar to our AAs~\cite{cloudflare_bybass}, 
although they focus on addressing the usability issues for Tor users when visiting Cloudflare-powered websites.  

Finally, semi-trusted third parties such as social network operators (\eg Google, Facebook, 
and Twitter), may also run access authorities (shown as \emph{TTP AA} in
Figure~\ref{fig:sys_arch}) based on pre-agreed terms.  To prevent account information 
leakage to Tor and service providers, Tor users only authenticate themselves to 
the social network operators. Service providers or Tor only learn a single bit of
information: whether a Tor client has a valid account (\ie capability seed) or not. 
\section{\sys-Enhanced Site Access}\label{sec:serviceCapability}
We now elaborate on the capability design for accessing 
\sys-enhanced service providers such as websites.
To mitigate the tension between service providers and Tor users,
our key observation is that service providers should not treat all connections from one Tor 
exit relay equally since each exit relay is shared by many Tor users.
Instead, accountability should be enforced at the granularity of Tor clients so that each 
service provider can throttle malicious Tor clients without blocking legitimate Tor users.
To this end, \sys designs site-specific capabilities that
allows a service provider to enforce self-selected
access rules on anonymous Tor connections.

\subsection{Pre-capability Design}
\label{sec:site_pre_cap_computation}
Before visiting a \sys-enhanced site, a Tor client must first request pre-capabilities from an AA. 
The client is free to choose any AA based on what capability seed the client prefers to give. 
To request a pre-capability, the client \first provides a valid capability seed to its selected AA 
and \second provides blinded information for the AA to compute pre-capabilities. 
The client can hide its  network identity from the AA, for instance, by using Tor.

\parab{Capability Seed Validation.} 
Depending on the accepted type of capability seed, an AA performs corresponding seed verification. 
For instance, if an AA accepts proof-of-work schemes, it needs to verify that solutions to the presented 
challenge are correct. Further, an AA needs to ensure that the pre-capability request rates by any capability 
seed does not exceed the two rate limiters discussed in \S~\ref{sec:rate_limiting}. 
Since each AA maintains separate rate limiters and signing keys for two pre-capability types 
(\ie either for \sys-enhanced service access or for \sys-enhanced Tor circuit creation), 
Tor clients must specify the pre-capability type in their requests (in this section, it is for 
accessing service providers). In \S~\ref{sec:site_cap_spending}, we will explain how a 
site defines its access policies based on these pre-capability release rate limiters published by all AAs.

\parab{Information Required to Compute Pre-capabilities.}
To request pre-capabilities, the client provides its selected AA the following set of information 
$\{ \mathbb{S},  n, \mathcal{T}_s, \mathcal{F}\}$, where $\mathbb{S}$ is the domain name 
of site that the client is going to visit, $n$ is a 128-bit cryptographic nonce generated by the client,  
$\mathcal{T}_s$ is a universally agreed timestamp to indicate the freshness 
of the information and $\mathcal{F}$ is fingerprint of the selected AA. All information is 
blinded~\cite{blind} by the client to avoid information leakage to the selected AA. 

The set of information is designed to prevent abuse. In particular, $\mathbb{S}$ is used to make the capability 
site-specific to prevent capability double-spending at different sites.  The nonce $n$ is added to ensure 
the uniqueness of each pre-capability, which in turn ensures the uniqueness of each capability. 
The $\mathcal{T}_s$ indicates the freshness of pre-capabilities so that expired ones are nullified 
automatically. The client is required to use Tor's daily generated fresh random number~\cite{Tor_random_number} 
as $\mathcal{T}_s$ such that at any time all valid capabilities have the exact same timestamp. This design eliminates 
the possibility of information leakage cased by timestamp abuse. $\mathcal{F}$ is added to allow other entities 
(\ie clients, Tor relays and service providers) to use correct public keys to verify signatures. 

\parab{Computation.}  
Upon validation of the client's pre-capability request, the AA computes pre-capabilities using the 
blinded information provided by the client. Pre-capabilities computed by an AA $\mathcal{A}_i$ 
are denoted by $\mathcal{P}_{\mathcal{A}_i}$. Then we have 

\begin{equation}\label{eqn:P_N}
\mathcal{P}_{\mathcal{A}_i} = \{\mathbb{S} ~|~ n ~|~ \mathcal{T}_s 
~|~ \mathcal{F}_{\mathcal{A}_i} \}^b ~|~ \mathcal{S}^b_{\mathcal{A}_i},
\end{equation}
where $\mathcal{F}_{\mathcal{A}_i}$ is $\mathcal{A}_i$'s fingerprint, 
$\mathcal{S}^b_{\mathcal{A}_i}$ is ${\mathcal{A}_i}$'s 
blind signature over the set of blinded information 
$\{\mathbb{S} ~|~ n ~|~  \mathcal{T}_s ~|~ \mathcal{F}_{\mathcal{A}_i} \}^b$, 
and $|$ represents concatenation throughout the paper. 

\parab{Pre-capability Renewal.} 
One key challenge for designing pre-capabilities is to ensure 
that Tor clients do not have to repeatedly solve challenges when browsing the web. 
A strawman design is that an AA can issue many (\ie a few hundred) pre-capabilities 
for each solved challenge. However, this strawman design has at least two 
shortcomings: \first it breaks the site-specific pre-capability design since 
the client may not be able to forecast the sites that it is going to visit so as  
to provide these blinded information immediately after solving challenges; 
\second the design makes it easier for automated bots to accumulate  
pre-capabilities, weakening the entire system. 

To combat these problems, we propose a pre-capability renewal protocol. In particular, 
when a client first presents its challenge solution (\ie capability seed) to an AA, the AA issues 
the client a unforgeable \emph{pseudonym} $\mathcal{I} = \{r ~|~ \phi\}$ where $r$ is a 
random 128-bit nonce and $\phi$ is the AA's signature over $r$. Later on, the client 
presents $\mathcal{I}$ as a proof of validation when requesting new pre-capabilities 
from the AA, allowing the client to bypass future challenges. Not only does the site-specific pre-capability 
design hold with this design, but also the AA can account each 
pre-capability request on a specific solved challenge (\ie capability seed) to enforce 
the per-seed rate limiting described in \S~\ref{sec:rate_limiting}. Each pseudonym has a validation 
period determined by the AA. Clients with expired pseudonyms are required to solve new 
challenges to obtain new pseudonyms that are unlinkable to previous ones. 

\emph{Impact of the Pseudonym on Anonymity.} 
Different from the prior pseudonym-based anonymous blacklisting systems~\cite{chaum, chen}, in which 
a user interacts with a service provider using a persistent pseudonym, the pseudonym in our pre-capability 
renewal protocol is transient and never presented to both service providers 
and Tor relays. The pseudonym in our protocol is only linked with a specific challenge solution 
served as an anonymous capability seed. Since a Tor client presents its pseudonym to an  
AA through Tor, the AA cannot link the pseudonym with the client. 
Further, since all site-related information sent to the AA is blinded, 
the pseudonym is unlinkable with any site access as well. 
Thus, using pseudonym in our protocol does not impact Tor users' anonymity.

\subsection{Site-specific Capability Design}\label{sec:siteCapability}
After obtaining $\mathcal{P}_{\mathcal{A}_i}$, the Tor client \emph{unblinds} the signature  
using its secret blind factor to produce the unblinded version of the pre-capability, which 
is the capability spendable at a specific site. In particular, 
\begin{equation}\label{equ:site_capability}
\mathcal{C} = \mathbb{S} ~|~ n ~|~  \mathcal{T}_s ~|~  \mathcal{F}_{\mathcal{A}_i} ~|~ \mathcal{S}_{\mathcal{A}_i}
\end{equation}

The capability $\mathcal{C}$ contains a set of unblinded information that allows 
the site $\mathbb{S}$ to perform capability verification when the client presents $\mathcal{C}$
to access the site, as detailed in \S~\ref{sec:site_cap_spending}.

Employing blind signature is the key to ensure that \sys preserves Tor's privacy guarantee. 
First, signatures from the AAs prevent unauthorized entities from issuing capabilities. 
Second, using blind signature avoids disclosing any site-related information to the AAs since the blinded 
information sent to the AAs is unlinkable with the ``plain'' information produced by the client. 
Such unlinkability further ensures the unlinkability between the client and its capability spending 
even if the AAs could collude with the site, which preserves online anonymity of Tor users. 
We provide a formal security proof in \S~\ref{sec:security_analysis}.

\subsection{Site-Specific Capability Spending}\label{sec:site_cap_spending}
\parab{Capability Validation.}
Tor clients spend site-specific capabilities at \sys-enhanced sites to request services. 
Upon receiving capabilities, a \sys-enhanced site first validates them before subsequent 
processing. A site-specific capability is valid if \first it encloses an authentic 
signature from an AA; \second it encloses a domain name that is consistent with the site; 
\third the capability is not expired (\ie $\mathcal{T}_s$ is the fresh random number 
released by Tor); and \four the capability is not nullified by the site. If any of these 
conditions does not hold, the site rejects this capability to deny access. If a  
CDN provider (\eg Cloudflare) processes capabilities on behalf of its powered sites, the 
second rule is passed as long as the enclosed domain is owned by one of the 
CDN provider's customers. In the fourth rule, whether a capability is nullified or not is 
decided by the site's access policies, as detailed below. 

\parab{Site-Defined Access Policies.}
Once a site-specific capability is validated, the site accepts the Tor client's service request. 
Since the major form of Tor abuse is that automated bots use Tor to conduct various malicious 
activities against the site~\cite{cloudflare-trouble} (\eg content scraping, vulnerability scanning, comment  
spamming and so forth), the site needs to further control the number of service requests 
(\eg HTTP requests) allowed by each capability. We clarify that each site can have 
its own definition of service requests. Once a Tor client's service request count exceeds a 
threshold, the site nullifies the current capability and requires a new site-specific capability for 
subsequent service requests.  Recall that the pre-capability request rate by each 
client is limited by the AAs through the per-seed rate limiting design in \S~\ref{sec:rate_limiting}. 
Thus, together with these rate limiters, it is possible for the site to design access policies 
so as to bound a strategic adversary's  service request rate using self-selected 
parameters, as detailed below.

\textbf{\emph{Policy Definition.}}
Assume the following set of access authorities $\{\mathcal{A}_0, \mathcal{A}_1,..., \mathcal{A}_n\}$ 
are deployed, and each authority accepts one type of capability seed. In this context, 
the site defines its access policy as $\{w_0, w_1, ..., w_n\}$ where $w_i$ is the number 
of service requests allowed by one valid site-specific capability issued by the access authority $\mathcal{A}_i$.

We now formulate $\{w_0, w_1, ..., w_n\}$ mathematically. 
We denote the set of capability seeds by $\{s_0, s_1, ..., s_n\}$ and authority $\mathcal{A}_j$ 
accepts seed $s_j$. Let $c_j$ denote the cost of obtaining a capability seed $s_j$. 
We denote the cost of obtaining one network identity (\ie IP address) by $\lambda$. 
Let $r_j$ denote the maximum rate at which a seed $s_j$ can request pre-capabilities 
(for accessing service providers) from authority $\mathcal{A}_j$. 
Assume that for any client connecting to the site directly 
without using Tor, the site allows a maximum 
service request rate $\widetilde{\mathcal{O}}$ 
before either blocking the client or forcing the client to 
solve challenges. Then to bound a strategic adversary's  
service request rate by using Tor, the site derives $\{w_0, w_1, ..., w_n\}$ 
to ensure that the following condition is satisfied for any set of 
parameters $[\alpha_0, \alpha_1, ..., \alpha_n]$ where 
$\alpha_i \in [0,1]$ and $\sum_{i=0}^n \alpha_i = 1$. 

\begin{equation}\label{eqn:site_access_rule}
\sum_{i=0}^n ~\frac{\alpha_i \cdot \lambda }{c_i}\cdot r_i \cdot w_i 
\leq \epsilon \cdot \widetilde{\mathcal{O}}, 
\end{equation}
where $\epsilon$ is a site-defined parameter.

\textbf{\emph{Policy Correctness.}} 
The parameters $[\alpha_0, \alpha_1, ..., \alpha_n]$ represent the adversary's strategy of  
purchasing various types of capability seeds. Thus, if formula~(\ref{eqn:site_access_rule}) holds for any strategy,
the site can guarantee that the maximum Tor-emitted service request rate achieved by an adversary 
when spending $\lambda$ on purchasing capability seeds is no greater than $\epsilon \cdot \widetilde{\mathcal{O}}$. 
Thus, if an adversary that spends a certain  amount of resources on obtaining network identities 
can access the site with rate $\mathcal{O}$ without using Tor, then the maximum rate that the adversary 
can request service from the site by using Tor is no greater than $\epsilon \cdot \mathcal{O}$, 
given that the adversary spends the same amount of resources on acquiring capability seeds. 
Equivalently, in order to achieve the same service request rate, the adversary has to 
spend $1/\epsilon$ times as many resources when launching attacks through Tor 
as it spends when launching attacks natively without using Tor. To ensure that 
formula~(\ref{eqn:site_access_rule}) holds for any attacker strategy, we choose 

\begin{equation}\label{eqn:site_rules_solved}
w_i \leq \epsilon \cdot \frac{c_i \cdot \widetilde{\mathcal{O}}}
{\lambda \cdot r_i}, ~\forall i \in [0,n]
\end{equation}

\textbf{\emph{Policy Enforcement.}}
If $w_i = 1$, then each capability is usable for exactly one service request. The site  
can enforce this by suppressing service requests with duplicate capabilities, for example, 
through the use of a Bloom filter. If $w_i > 1$, then statistically more than 
one service request should be allowed for each capability. To enforce this, 
the site stops accepting a capability with probability $1/w_i$, and then adds the 
capability to the duplicate suppressor. However, multiple service requests carrying the same 
capability can trivially be linked by the site. We discuss how to address this issue through 
system parameterization below. Finally, if $w_i < 1$, then each capability is 
accepted with probability $w_i$, and exactly one service request is allowed for each 
accepted capability. 

\textbf{\emph{Policy Parameterization.}} 
We now discuss the parameterization of $w_i$. First, to compute $w_i$, the site 
does not need to exactly know $c_i$. Instead, the site simply needs to assign specific 
weights to these capability seeds based on its policies. 
Further, with an ideal parameterization, $w_i$ should be exactly one since \first 
no capability is spendable on more than one service
request to ensure unlinkability and \second no additional capabilities are required for 
a single service request to avoid extra computation and networking 
overhead. However, it is difficult to reach the ideal parameterization since $r_i$ 
is chosen by the authority $\mathcal{A}_i$ that is unaware of the site's configurations 
$\epsilon$ and $\widetilde{\mathcal{O}}$. In addition, configurations can vary greatly 
among different sites so that an ideal parameterization for one site could be undesirable 
for others. 

To address the problem, \sys sets $r_i$ such that (with high probability) a Tor client 
can obtain enough capabilities so that it is feasible  for the client to present a unique capability for 
each TCP connection to the site. This ensures that the client can achieve the highest level of unlinkability 
offered by Tor, \ie service providers only see  TCP connections from Tor exit relays. 
We clarify that it is the client who determines how to spend its capabilities across TCP 
connections (as described below). The above parameterization is adopted only to ensure 
that spending a unique capability for each TCP connection is a feasible strategy for the client. 
A reasonable setting of $r_i$ can be estimated based on the live Tor measurement in~\cite{PrivCount}, 
which finds that during a 10-minute interval, each Tor client opens about $24$ web streams.   
In practice, the authority $\mathcal{A}_i$ should enforce $r_i$ over a longer period of time 
(\eg few hours) to accommodate usage burst. 

Note that when an AA $\mathcal{A}_k$ is deployed by the site itself, system 
parameterization for $\mathcal{A}_k$ is easier since 
the site determines the rate limiters for issuing pre-capabilities. 

\parab{Capability Spending by Tor Clients.}
Given $r_i$, some sites may end up with rules $w_i > 1$, \ie one capability 
is allowed for multiple service requests. In this case, the site needs to send a response to 
indicate whether a capability is nullified or not. Tor clients are free to determine their capability 
spending strategies. For instance, a Tor client can send $w_i$ service requests using the 
same capability within a single TCP connection (due to HTTP keep-alive), which still ensures 
the highest level of unlinkability. Or the client may choose to spend one capability 
across multiple TCP connections to allow trans-TCP linkability. We note that if a Tor client uses the 
default setting of Tor Browser, it already allows trans-TCP linkability since the Tor Browser 
uses session cookies. For a site that has $w_i$ less than $1$, it can enforce such policies 
by accepting one capability with probability $w_i$ and for each accepted capability, 
the site allows only one service request. 
\section{\sys-Enhanced Tor Access} \label{sec:TorCapability}
In this section, we detail the capability design for accessing the \sys-enhanced Tor network. 
The current Tor network suffers from a variety of botnet abuses such as large scale C\&C abuse~\cite{botnet_abuse_1,
	botnet_abuse_2,botnet_abuse_3, hopper2013protecting, hidden_service_abuse},  
relay flooding attacks~\cite{cellflood,sdos} and traffic analysis~\cite{mittal:ccs11,murdoch:sp05}. 
These abuses lead to various bad results, including poor system performance for legitimate Tor users, 
de-anonymization threats and bad reputation for Tor. The root cause of these attacks is that botnets 
can create an arbitrary number of Tor circuits without any limitation. Enforcing local rate limiting for 
circuit creation at each relay is unlikely to stop these attacks since a strategic botnet can 
instruct each bot to enumerate all relays to circumvent the local rate limiting. 

With \sys, Tor can globally control circuit creations by any client using  
our capability scheme. In particular, when \sys is activated, clients are required to 
possess valid capabilities in order to create \sys-enhanced circuits (to be incrementally 
deployable, circuit creation requests without valid capabilities are de-prioritized in case 
of congestion). Then, by controlling the rate at which a Tor client can obtain  
capabilities, \sys can explicitly limit the client's circuit creation rate.  

\subsection{Relay-Specific Capability Design}\label{sec:design_relay_specific_cap}
To create a three-hop \sys-enhanced circuit, a Tor client $\mathbb{U}$ needs to obtain 
three capabilities, each of them being specific to a relay on the circuit. The design of 
relay-specific capabilities is identical to that of site-specific capabilities, except for the following.  
\first During pre-capability requesting, the client needs to specify 
the proper pre-capability type, \ie it is for Tor-enhanced circuit creations. 
Further, to request a pre-capability specific to a relay $\mathbb{R}$, 
the client encloses the fingerprint of relay $\mathbb{R}$ (rather than any site domain)
in the set of blinded information sent to its selected AA. \second Relay-specific capabilities are 
spendable at \sys-enhanced relays (not at any sites) for creating Tor-enhanced circuits 
through the relays. The relays first validate received capabilities (based on a set of rules 
similar to those defined in \S~\ref{sec:site_cap_spending}) before extending circuits.

We clarify that to request pre-capabilities, Tor clients do not have to use 
\sys-enhanced circuits to reach the AAs. Thus, there is no deadlock for bootstrapping \sys. 
Another alternative is pre-installing few relay-specific capabilities 
on Tor clients so that using \sys-enhanced circuits to bootstrap the system is viable.    

\parab{Policy Definition.} Similar to site-specific capabilities, relay-specific 
capabilities enable Tor to enforce access rules for its relays. In this paper, 
we propose to use capabilities to control the circuit creation rate by any Tor client   
so as to mitigate those aforementioned botnet abuses against Tor. In particular, assume 
the following set of AAs  $\{\mathcal{A}_0, \mathcal{A}_1,..., \mathcal{A}_n\}$ are deployed 
and authority $\mathcal{A}_i$ accepts a type of capability seed $s_i$. In this context, 
Tor defines its access policies as $\{q_0, q_1, ..., q_n\}$, where $q_i$ is the maximum rate at which 
a capability seed $s_i$ can request pre-capabilities (for creating Tor-enhanced circuits) from 
authority $\mathcal{A}_i$. Then in order to bound a Tor client's circuit creation rate, 
$\{q_0, q_1, ..., q_n\}$ should satisfy the following condition for any attacker 
strategy $[\alpha_0, \alpha_1, ..., \alpha_n]$ where $\alpha_i \in [0,1]$ and 
$\sum_{i=0}^n \alpha_i = 1$. 

\begin{equation}\label{eqn:Tor_access_rule}
\sum_{i=0}^n ~ \frac{\alpha_i \cdot \lambda \cdot q_i}{3 \cdot c_i} 
\leq \mathcal{T}, 
\end{equation}
where $\lambda$ is the cost of getting one network identity, 
$c_i$ is the cost for obtaining one capability seed $s_i$ and 
$\mathcal{T}$ is the maximum circuit creation rate allowed for a client, 
which is a parameter controlled by Tor. We note that the constant $3$ 
appears in above formula since a standard  Tor circuit contains $3$ 
relays and each of them consumes a relay-specific capability.

To ensure the correctness of formula~(\ref{eqn:Tor_access_rule}) 
for any attacker strategy, we choose  
\begin{equation}\label{eqn:Tor_access_rule_solved}
q_i \leq \frac{3 \cdot c_i \cdot \mathcal{T}}{\lambda}, ~\forall i \in [0,n]
\end{equation}

\parab{Parameterization.}
Similar to how sites determine its access rules using Equation (\ref{eqn:site_rules_solved}), 
to compute $q_i$, the Tor Project needs to assign certain weights to these capability seeds. 
Further, a proper configuration of $\mathcal{T}$ can be determined based on the live Tor 
measurements in \cite{PrivCount}. In particular, during an 10-minute interval, PrivCount~\cite{PrivCount} 
estimates that a Tor client opens about $4$ Tor circuits. Thus, the maximum rate $\mathcal{T}$ at 
which one Tor client can create circuits should be close to $4$ per ten minutes. 
In practice, each AA should enforce $q_i$ over a longer period of time (\eg few hours) 
to accommodate usage bursts and relay churn.

\subsection{Capability Exchange for Tor OSes}\label{sec:capability_trans}
The design of relay-specific capabilities needs to be augmented with a \emph{capability exchange 
protocol} to better support Tor onion services (OSes). In particular, a Tor onion server (itself 
runs a Tor client) needs to open many Tor circuits in order to serve all its clients (referred to as OS-clients). 
Although a Tor hidden server can continue to use legacy Tor circuits to serve its OS-clients, we do design a 
capability exchange protocol to enable onion servers to use \sys-enhanced circuits as well. 

The design intuition is that a OS-client requests a new type of capability, \ie \emph{trans-capability}, from the AAs, 
and sends it to the OS, which subsequently redeems the trans-capability at the AAs for new 
pre-capabilities. The trans-capability, accounted on the capability seed of the OS-client, 
anonymously informs the AAs that the hidden server needs to create a new \sys-enhanced 
circuit to serve the OS-client. For better readability, the detailed design of the protocol is deferred 
in \S~\ref{sec:appendix:trans_capability}. 

\section{Security Analysis}\label{sec:security_analysis}
In this section, we perform a formal security analysis for the impact of \sys on Tor users' anonymity.  
Let $\mathbb{N}_T$ denote the set of Tor clients that request pre-capabilities 
from the AAs, and subsequently present capabilities to access service providers or Tor relays. 
We first present two useful lemmas on unlinkability.

\subsection{Lemmas}\label{sec:lemmas}\begin{lemma}\label{lemma:1}
Consider any client $\mathbb{U}\in\mathbb{N}_T$. By colluding with each other, both the AAs 
and a service provider $\mathbb{W}$ gain only negligible advantage over random guessing  
when trying to link a specific Tor-emitted site access with the client $\mathbb{U}$.
\end{lemma}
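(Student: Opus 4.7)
The plan is to structure this as a reduction-style argument: if a colluding adversary consisting of all AAs and the service provider $\mathbb{W}$ could non-negligibly link a Tor-emitted site access to client $\mathbb{U}$, then this adversary could be used either to break the unlinkability property of the underlying blind signature scheme or to break Tor's anonymity guarantee for the initial Tor circuit connecting $\mathbb{U}$ to an AA. Since the lemma explicitly restricts attention to the anonymity guarantees that Tor currently provides (as carved out in the ``Elided Design Goals'' paragraph), I would assume the Tor channel itself is secure and reduce the link-attack to blind-signature unlinkability.

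First I would enumerate exactly what each colluding party observes. The AA $\mathcal{A}_i$ sees: the blinded tuple $\{\mathbb{S}\mid n\mid \mathcal{T}_s\mid \mathcal{F}_{\mathcal{A}_i}\}^b$, its own blind signature $\mathcal{S}^b_{\mathcal{A}_i}$, and (via the Tor connection) an exit relay IP together with the transient pseudonym $\mathcal{I}$ used for renewal. The service provider $\mathbb{W}$ sees: the unblinded capability $\mathcal{C}=\mathbb{S}\mid n\mid \mathcal{T}_s\mid \mathcal{F}_{\mathcal{A}_i}\mid \mathcal{S}_{\mathcal{A}_i}$ together with an exit relay IP. I would then argue, field by field, that none of the components of $\mathcal{C}$ carry information that correlates with $\mathbb{U}$: $\mathbb{S}$ and $\mathcal{F}_{\mathcal{A}_i}$ are shared across many clients, $\mathcal{T}_s$ is the single daily value Tor publishes for every client (eliminating timestamp-based partitioning), and $n$ is a fresh uniformly random nonce chosen independently of $\mathbb{U}$'s identity.

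Next I would isolate the only nontrivial correlation channel, namely the pair (blinded request, unblinded capability) across the AA and $\mathbb{W}$ views. Here I would invoke the blind-signature unlinkability game: given a transcript of $k$ honest blind-signing interactions and an unblinded signature on one of the $k$ messages, no p.p.t.\ adversary can identify the corresponding interaction with non-negligible advantage over $1/k$. I would formalize the reduction by constructing a simulator that embeds a blind-signature challenge into the protocol transcript between $\mathbb{U}$ and the colluding parties; a link attack with advantage $\delta$ then yields a blind-signature distinguisher with the same advantage, contradicting the assumed cryptographic security. The pseudonym $\mathcal{I}$ needs separate handling: I would argue it only binds the AA's view to a capability seed (not to $\mathbb{U}$'s identity, since the pseudonym is always presented over Tor), and that because the blinded request is independent of $\mathcal{I}$, including $\mathcal{I}$ in the AA's view does not improve linking.

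The hard part, I expect, is the pseudonym argument and the composition across multiple clients. Specifically, while a single $\langle\text{pre-cap request},\text{capability}\rangle$ pair is unlinkable by blind signatures, the adversary sees many parallel transcripts; I would need to argue via a standard hybrid that unlinkability degrades only by the polynomial anonymity-set size, still leaving the advantage over $1/|\mathbb{N}_T|$-random-guessing negligible. A secondary subtlety is ensuring that the per-seed rate limiter and the choice of AA $\mathcal{A}_i$ do not themselves partition the anonymity set; I would handle this by remarking that $\mathcal{A}_i$ appears in $\mathcal{C}$ as $\mathcal{F}_{\mathcal{A}_i}$ and thus at most restricts the anonymity set to clients who chose the same AA, which is a standard and intended coarsening rather than a leak about $\mathbb{U}$.
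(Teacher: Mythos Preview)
Your proposal is correct and in several respects more carefully worked out than the paper's own argument: you enumerate the adversarial views field by field, treat the pseudonym $\mathcal{I}$ and the rate-limiter/AA-choice partitioning as potential side channels, and sketch a proper simulator-based reduction together with a hybrid over the anonymity set. The paper's proof is much terser and takes a different decomposition. Rather than assuming the Tor channel from $\mathbb{U}$ to the AA is secure, the paper \emph{grants the adversary the worst case}: it assumes the issuing AA can already link the pre-capability $\mathcal{P}$ to $\mathbb{U}$, as if Tor's unlinkability were broken on that leg. Under that concession, linking the site access $\mathbb{V}$ to $\mathbb{U}$ collapses to linking the unblinded capability $\mathcal{C}$ to the blinded $\mathcal{P}$, which is exactly the blind-signer's forbidden task; contradiction. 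This buys a strictly stronger conclusion---the lemma holds even if the client-to-AA Tor circuit is compromised---while dispensing with the field-by-field, pseudonym, and hybrid analyses you supply. Your route is the more rigorous cryptographic reduction; theirs is the shorter one that needs fewer hypotheses.
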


\begin{proof}
We first specify the notations used in the proof. Let $\mathbb{V}$ denote a Tor-emitted site 
access to $\mathbb{W}$ initiated by the Tor client $\mathbb{U}$. Note that the definition of a 
site access is decided by $\mathbb{W}$. Let $\mathcal{C}$ denote the service-specific capability 
that $\mathbb{U}$ sends to $\mathbb{W}$ to support the site access $\mathbb{V}$. 
Let $\mathcal{P}$ denote the pre-capability used by $\mathbb{U}$  to compute $\mathcal{C}$.

Since the client $\mathbb{U}$ can use Tor to connect to the AAs when requesting the 
pre-capability $\mathcal{P}$, in the ideal case, $\mathbb{U}$ is unlinkable with $\mathcal{P}$. 
However, to ensure that our lemma still holds in the worst case when 
Tor's unlinkability is broken by adversaries, we assume the AA $\tilde{\mathcal{A}}$ that issues 
$\mathcal{P}$ can link $\mathcal{P}$ with the client $\mathbb{U}$. Thus, the service provider $\mathbb{W}$ 
and other AAs can have such linkability as well by colluding with $\tilde{\mathcal{A}}$.

Next, we prove the lemma by contradiction. Assume that the AAs 
and $\mathbb{W}$ can design an algorithm $\mathcal{K}$ that enables the AAs and $\mathbb{W}$ 
to link the site access $\mathbb{V}$ with the client $\mathbb{U}$. 
Since the site access $\mathbb{V}$ is linkable with the capability $\mathcal{C}$ 
(as $\mathcal{C}$ is presented to the site to support the access $\mathbb{V}$)
and the client $\mathbb{U}$ is linkable with the pre-capability $\mathcal{P}$ (based on 
the above worst-case assumption), designing the algorithm $\mathcal{K}$ is 
equivalent to designing another algorithm $\mathcal{K}'$ that enables the AAs 
and $\mathbb{W}$ to link the capability $\mathcal{C}$ with the pre-capability $\mathcal{P}$. 

In \sys's design, $\mathcal{P}$ is the blinded message signed by the AA $\tilde{\mathcal{A}}$ 
(\ie the blind-signer), and $\mathcal{C}$ is the unblinded version of $\mathcal{P}$ produced by 
the client $\mathbb{U}$ using a secret factor unknown to the blind-signer. Thus, the 
problem of designing $\mathcal{K}'$ to link $\mathcal{P}$ with $\mathcal{C}$ is  
the same as designing an algorithm $\mathcal{K}''$ that allows a blind-signer to link the 
blinded message it signs to the unblinded message without knowing the secret factor, which 
is impossible in a blind signature~\cite{blind, partial}. This contradiction proves that the 
hypothetical algorithm $\mathcal{K}$ does not exist, indicating both AAs and $\mathbb{W}$ gain 
only negligible advantages of linking an specific site access $\mathbb{V}$ with client $\mathbb{U}$ 
via collusion. We clarify this lemma does not claim that colluding among multiple entities does 
not pose a risk for Tor; it only proves that \sys does not introduce any further risk even if multiple entities 
collude with each other. 
\end{proof}

Using the similar reduction proof as  
Lemma \ref{lemma:1}, we can prove the following lemma. 
%Similarly, we can prove the following lemma. 

\begin{lemma}\label{lemma:2}
Consider any client $\mathbb{U}\in\mathbb{N}_T$. 
By colluding with each other, both the AAs and Tor 
relays gain only negligible advantage over random guessing when trying to link a 
specific relay access (\ie \sys-enhanced circuit creation) with client $\mathbb{U}$.
\end{lemma}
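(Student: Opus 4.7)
The plan is to mirror the reduction chain used in Lemma~\ref{lemma:1} almost verbatim, relying on the fact (stated in \S\ref{sec:design_relay_specific_cap}) that relay-specific capabilities are constructed identically to site-specific ones except that the blinded payload carries a relay fingerprint $\mathcal{F}_{\mathbb{R}}$ in place of a site domain $\mathbb{S}$. Because the underlying cryptographic object is still a blind signature issued by an AA and later unblinded by the client, the same unlinkability argument should apply.

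First, I would set up parallel notation: let $\mathbb{V}$ denote a \sys-enhanced circuit-creation event through a relay $\mathbb{R}$ initiated by client $\mathbb{U}$, let $\mathcal{C}$ be the relay-specific capability presented to $\mathbb{R}$ to authorize $\mathbb{V}$, and let $\mathcal{P}$ be the pre-capability, signed by some AA $\tilde{\mathcal{A}}$, from which $\mathbb{U}$ derived $\mathcal{C}$ via unblinding with a secret factor. Following the worst-case framing of Lemma~\ref{lemma:1}, I would assume that Tor's own unlinkability between $\mathbb{U}$ and $\tilde{\mathcal{A}}$ is already broken, so that via collusion every AA and every \sys-enhanced relay is granted the pairing $(\mathbb{U}, \mathcal{P})$ for free; the claim is that even then, linking $\mathbb{V}$ to $\mathbb{U}$ is infeasible.

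Next, I would argue by contradiction: suppose a colluding algorithm $\mathcal{K}$ links $\mathbb{V}$ to $\mathbb{U}$ with non-negligible advantage. Since $\mathbb{V}$ is by construction paired with $\mathcal{C}$ (the capability that $\mathbb{R}$ consumed to build the circuit) and $\mathbb{U}$ is paired with $\mathcal{P}$ by assumption, composing $\mathcal{K}$ with these two known pairings yields an algorithm $\mathcal{K}'$ that links the unblinded $\mathcal{C}$ to the blinded $\mathcal{P}$ it was derived from. By the unlinkability property of the blind signature scheme~\cite{blind, partial}, no such $\mathcal{K}'$ can exist with non-negligible advantage without knowledge of the client's blinding factor, which is kept secret from $\tilde{\mathcal{A}}$ and never leaves $\mathbb{U}$. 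This yields the desired contradiction.

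The main obstacle to worry about is whether replacing $\mathbb{S}$ with $\mathcal{F}_{\mathbb{R}}$ inside the blinded payload introduces any new side channel that was absent in Lemma~\ref{lemma:1} (\eg because relay fingerprints come from the small, public consensus set, whereas site domains are more diffuse). I would dispatch this by noting that $\mathcal{F}_{\mathbb{R}}$ is chosen by the client, blinded together with the nonce $n$ and the universally agreed timestamp $\mathcal{T}_s$, and thus appears to $\tilde{\mathcal{A}}$ as a uniformly random group element indistinguishable from any other blinded payload; hence the reduction to blind-signature unlinkability carries over without modification, and the lemma follows by the same contradiction argument as Lemma~\ref{lemma:1}.
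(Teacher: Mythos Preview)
Your proposal is correct and takes exactly the approach the paper intends: the paper's own proof of Lemma~\ref{lemma:2} is the single sentence ``Using the similar reduction proof as Lemma~\ref{lemma:1}, we can prove the following lemma,'' and your write-up faithfully instantiates that reduction with the relay fingerprint in place of the site domain. Your additional discussion of the relay-fingerprint side channel is a nice observation that goes beyond what the paper bothers to address, but it does not change the argument.
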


\iffalse
\begin{figure}[t]
\centering
\mbox{
\subfigure{\includegraphics[scale=0.45]
{Figures/anonymity_degree.pdf}}
}
\caption{As more clients participate in \sys, the degree 
of anonymity promptly reaches 1, indicating little information leakage.}
\label{fig:anonymity_degree}
\end{figure}
\fi

\subsection{Information Leakage Analysis}
Given the above two lemmas, we now analyze the impact of \sys on Tor user anonymity. 
We measure the possible information leakage to an arbitrary service provider $\mathbb{W}$ 
based on  \emph{degree of anonymity}~\cite{entropy1,anonymity_degree}. Our analysis uses 
information-theoretic entropy~\cite{entropy2} as the measure of information contained 
in a probability distribution. Recall that $\mathbb{N}_T$ denote the set of \sys-upgraded Tor clients. 
Given an arbitrary capability-enhanced site access (\ie an access supported by a valid capability), 
$\mathbb{W}$ believes that with probability $p_i$, the access originates from client $i$ in $\mathbb{N}_T$. 
Thus, $\mathbb{W}$ maintains a probability distribution $I$ for all anonymous accesses. Then, the entropy 
(\ie the information contained in the distribution $I$) is defined as 
$H_{\mathbb{W}} = -\sum_{i \in \mathbb{N}_T} p_i \cdot \log_2(p_i).$

Based on the unlinkability proven in Lemma \ref{lemma:1}, we have $p_i = \frac{1}{N_T}$, 
where $N_T$ is the size of the anonymity set $\mathbb{N}_T$. Thus, the entropy after introducing \sys is 
\begin{equation}
H_{\mathbb{W}} = \log_2 N_T. 
\end{equation}

Next, we analyze the system entropy before introducing \sys. 
Let $\mathbb{N}$ denote the entire set of anonymous Tor clients. 
Notice that the current Tor network protects $\mathbb{W}$ from linking 
a (native) site access with a specific Tor client. Thus, given the 
entire anonymous client set $\mathbb{N}$, the maximum entropy $H_M$ of 
the system k is $H_M= \log_2 N$, where $N$ is the size of the anonymity set $\mathbb{N}$.  

Thus, based on the definition in \cite{entropy1, anonymity_degree}, the degree of anonymity $d$ after 
introducing \sys is 
\begin{equation}\label{equ:anonymity_degree}
d = 1 - \frac{H_M-H_{\mathbb{W}}}{H_M} = \frac{\log_2 N_T}{\log_2 N}
\end{equation}

\parab{Anonymous Set Analysis.} Given Equation~(\ref{equ:anonymity_degree}), 
the information leakage is determined by the size of the anonymous client set before and 
after \sys is introduced. Therefore, once all Tor clients are upgraded to support 
\sys, there is no information leakage at all. Thus, eventually, \sys completely 
preserves the privacy guarantee offered by the Tor network. To 
mitigate the one-time privacy issue during the early deployment phase of \sys, 
the Tor Project can require mandatory client upgrades from a certain 
time point to ``force'' all active clients to serve as \sys initiators. 

\section{Implementation}\label{sec:implementation}

In this section, we present the full implementation of \sys. 

\subsection{Capability Implementation}\label{sec:implementation:capability}
We implement capability-related computation  using \textsf{C}, \textsf{Python} and \textsf{JavaScript} to 
consider various usage scenarios. For instance, the capability design can be directly built into the Tor software written 
in \textsf{C} (as shown in \S~\ref{sec:implementation_tor}), or it can be implemented as a plugin for the Tor browser, 
which executes capability-related computation in \textsf{JavaScript} (as shown in \S~\ref{sec:implementation:AAs}). Websites  
may compute capabilities using any language. Thus, we use \textsf{Python} as an example due to its popularity in web applications.

We use the RSA algorithm to perform capability-related cryptographic operations such as blind signing. 
The $\textsf{C}$ implementation uses the OpenSSL library~\cite{openssl} and the \textsf{Python} implementation 
imports the PyCrypto module~\cite{pycrypto}. Since no standardized $\textsf{JavaScript}$ 
library for computing blind signatures is available, we develop our own library based on $\textsf{crypto-js}$~\cite{cryptojs} 
and $\textsf{BigInt}$~\cite{bigint}, two libraries that allow us to perform computation (\eg modulo) for 
very large prime numbers in $\textsf{JavaScript}$. We benchmark the capability computation overhead 
in \S~\ref{sec:evaluation_capability_overhead}.

\subsection{Implementation of the Access Authority}\label{sec:implementation:AAs}
For an AA accepts CAPTCHAs as capability seeds (referred to as CAA), we implement it as a 
web server that deploys Google's reCAPTCHA~\cite{reCAPTCHA} service. 
For an AA accepts computational puzzles (referred to as PAA), 
it accepts puzzle solutions over HTTP (or HTTPs) requests. These AA servers define 
a customized HTTP header (\textsf{X-Capability}) to carry \sys-related cryptographic tokens such as 
pseudonyms and pre-capabilities. To make the implementation transparent to clients (\ie no 
client-end network stack modifications are required), the AA servers add \textsf{X-Capability} in the 
\textsf{Access-Control-Allow-Headers} HTTP header option. 

Although Tor clients can access PAA using native HTTP libraries, the CAA needs to be accessed 
using browsers. Thus, we implement a Firefox add-on (referred to as \textsf{CapJS}) 
to execute \sys-related cryptographic operations in browsers. In real-world deployment, 
the add-on should be developed by trusted entities (\eg the Tor project) 
and signed by Mozilla so that Tor users can install it on their Tor browsers. 

\begin{figure}[t]
\centering
\mbox{
\subfigure{\includegraphics[width=0.98\linewidth]{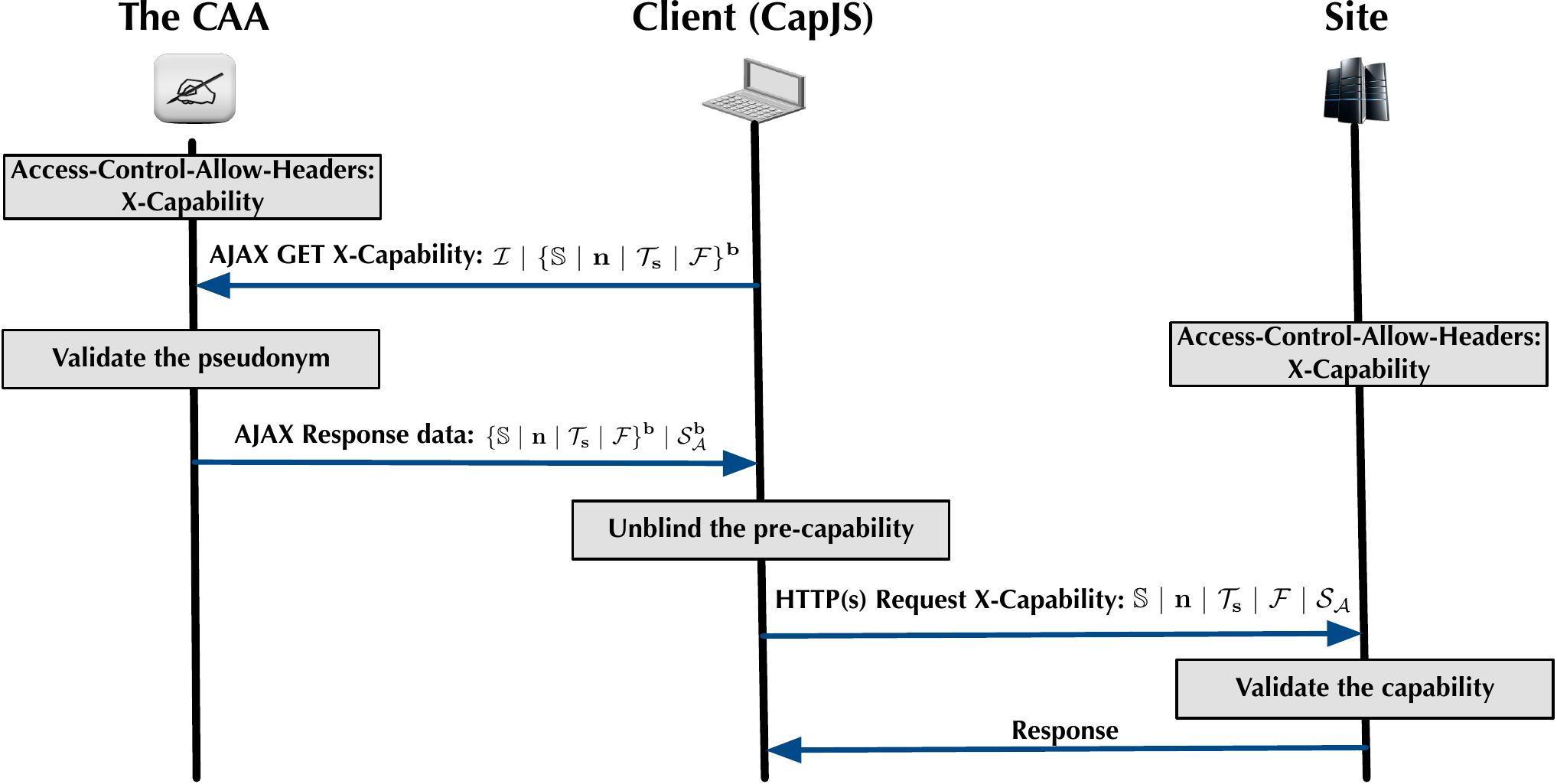}}
}
\caption{\sys-enhanced site access by a Tor client with \textsf{CapJS} installed on its browser.}
\label{fig:site_cap_implementation}
\end{figure}

\parab{\textsf{CapJS} Design.}
When a Tor client connects to a CAA server, \textsf{CapJS} checks cookies to determine whether a pseudonym $\mathcal{I}$ 
issued by the same CAA server is locally cached. If so, \textsf{CapJS} then puts  $\{\mathcal{I} ~|~ \{ \mathbb{S} ~|~ n ~|~ t_s~\mathcal{F}\}^b\}$
into the \textsf{X-Capability} header, where $\{ \mathbb{S} ~|~ n ~|~ t_s~\mathcal{F}\}^b$ is the set of blinded information described in the 
pre-capability design (\S~\ref{sec:site_pre_cap_computation}). If no pseudonym issued by the same CAA is available, \textsf{CapJS} only  
puts the set of blinded information into the \textsf{X-Capability} header.  With this customized HTTP header, \textsf{CapJS}  
sends an AJAX GET to the CAA server. 

After receiving the AJAX request, the CAA server inspects the \textsf{X-Capability} header. If a valid pseudonym is 
retrieved, the CAA server computes a pre-capability for the client using the blinded information carried in the header. 
Otherwise, the CAA server loads a reCAPTCHA challenge page for the client. Once the challenge is successfully 
solved, the CAA server computes a pre-capability by signing the blinded information, as well as a pseudonym 
for the client. These tokens are returned to the client in a JSON object responding to the client's AJAX GET request. 

After receiving a response from the CAA server, \textsf{CapJS} inspects the received data object to 
retrieve the pre-capability and the pseudonym (if applies). The pre-capability 
is then unblinded to produce a capability, and the pseudonym is cached for future use.

\subsection{\sys-enhanced Site Access}\label{sec:implemention_site}
To serve \sys-enhanced Tor clients, the  deployment required at service providers is lightweight. In particular, a site simply 
needs to add \textsf{X-Capability} in its \textsf{Access-Control-Allow-Headers} HTTP header option to allow \textsf{CapJS} 
to pass site-specific capabilities in the header. \textsf{CapJS} is responsible for sending capabilities to corresponding 
sites if the client visits multiple \sys-enhanced service providers, and potentially enforcing the capability 
spending policies discussed in \S~\ref{sec:site_cap_spending}, Upon receiving capability-enhanced service request, 
the site verifies the received capability using the rules discussed in \S~\ref{sec:site_cap_spending} to enforce its 
desired access policies.  Figure \ref{fig:site_cap_implementation} depicts the workflow of a site access by a client 
with \textsf{CapJS} installed on its browser.

\subsection{\sys-enhanced Tor Circuit}\label{sec:implementation_tor}
We now discuss the implementation of \sys-enhanced Tor circuit creation. 

\begin{figure}[t]
	\centering
	\mbox{
		\subfigure{\includegraphics[scale=0.4]{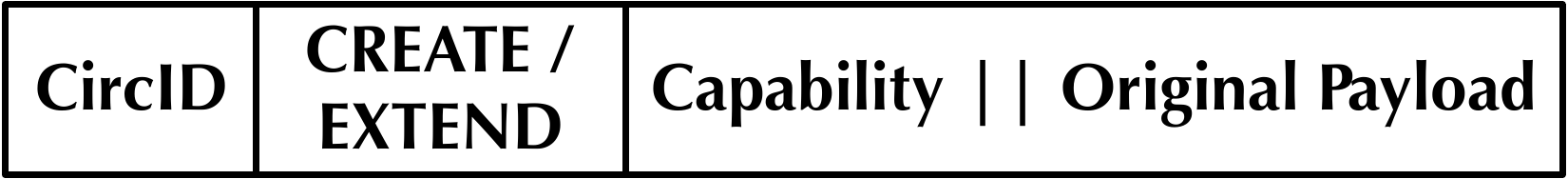}}
	}
	\caption{\sys-enhanced Tor circuit creation.}
	\label{fig:tor_cell}
\end{figure}

\parab{Tor Source Code Modification.}
We modify the Tor software source code to integrate our capability design into Tor circuit creation. The native Tor circuit 
creation proceeds as follows. The onion proxy (OP) on a Tor client first sends a \textsf{CREATE} cell 
containing the first half of the Diffie-Hellman handshake to a guard relay, which responds with a 
\textsf{CREATED} cell containing the second half of the handshake. To extend the circuit to a new relay $\mathbb{R}_e$, 
the OP sends a \textsf{RELAY\_EXTEND} cell (specifying the address of $\mathbb{R}_e$ and a new secret) to the last relay 
$\mathbb{R}_m$ on the partially-created circuit. $\mathbb{R}_m$ copies the received information into a new \textsf{CREATE} 
cell, and forwards the cell to $\mathbb{R}_e$. 

To create a capability-enhanced circuit, in addition to these original cells, the OP further sends a valid relay-specific capability 
to each hop. In our prototype, the OP \emph{prepends} a capability to the payload of the \textsf{CREATE} cell when connecting to  
the guard relay. Capabilities for subsequent relays are prepended to corresponding \textsf{RELAY\_EXTEND} cells. 
Figure~\ref{fig:tor_cell} illustrates the modified cell structure. Each relay first verifies the received capability based on 
the rules defined in \S~\ref{sec:design_relay_specific_cap} before processing the onionskin carried in the remaining 
payload. Since capability verification is much cheaper than the onionskin processing, this design saves the relay considerable 
compute resources for processing  bogus circuit creations without valid capabilities. Alternatively, a relay-specific 
capability can be carried via a customized cell. In this case, the cell should be sent together with the 
\textsf{CREATE} cell (or \textsf{RELAY\_EXTEND} cell) to avoid additional RTTs.

To validate our implementation, we  test the modified Tor source code in Shadow~\cite{shadow},  
a safe development environment to run real Tor source code in a private Tor network. Via log analysis, our test 
experiments show that our implementation properly embeds relay-specific  capabilities into the workflow of Tor circuit creation.

\parab{Live Tor Interaction.} 
Since live Tor relays do not run our modified Tor source code, we cannot create \sys-enhanced circuits directly 
through live Tor relays. Thus, we implement \emph{another} prototype to interact with live Tor relays during the capability-enhanced 
circuit creation. We defer implementation details in \S~\ref{sec:appendix:live_tor}.  

\section{EVALUATION}\label{sec:evaluation}
Our evaluation centers around the following. 

\parab{\sys Introduces Small System Overhead.} 
We show that capability-related operations introduce small overhead compared with the typical Tor circuit creation latency 
(\S~\ref{sec:evaluation_capability_overhead}). Further, we show that the deployment overhead of 
the AAs is small. For instance, the AAs collectively need only $11$ cores to support the entire 
set of current Tor users (\S~\ref{sec:eva_AAs}).  

\parab{\sys Effectively Enforces Site-Defined Policies.}
We demonstrate that \sys enables a site to  effectively enforce its access policies on anonymous 
Tor connections, \ie the site can bound service request rate by any strategic adversary via self-defined 
parameters (\S~\ref{sec:eva_site_policies}).

\parab{\sys Mitigates Various Abuses Against Tor.} 
Based on real data collected by Tor, we demonstrate \sys can mitigate large scale C\&C abuse and prevent 
cell flooding attacks against the Tor network (\S~\ref{sec:eva_mitigate_Tor_abuse}).

\subsection{Capability Computation Overhead}\label{sec:evaluation_capability_overhead}
\begin{table}[t]
\caption{The computational time (in microseconds) for capability-related cryptographic operations.}
\label{tab:computation_overhead}
\centering
\resizebox{0.98\linewidth}{!}{%
\begin{tabular}{l r r r l}
\toprule

\textbf{Operation} &
\textbf{Mean} &
\textbf{Median} &
\textbf{Std. Dev.} &
\textbf{Language} \\

\midrule

\multirow{3}{*}{Generation} & 232.0 & 232.0 & 0.1 & \textsf{C} \\
& 253.7 & 253.6 & 0.3 & \textsf{Python} \\
& 27,320.0 & 27,240.0 & 245.5 & \textsf{JavaScript} \\

\midrule

\multirow{3}{*}{Verification} & 25.6 & 25.6 & 0.0 & \textsf{C} \\
& 32.0 & 32.0 & 0.1 & \textsf{Python} \\
& 355.5 & 354.3 & 5.3 & \textsf{JavaScript} \\

\midrule

\multirow{3}{*}{Blinding} & 3.5 & 3.5 & 0.0 & \textsf{C} \\
& 46.3 & 46.3 & 0.1 & \textsf{Python} \\
& 18.1 & 18.1 & 0.3 & \textsf{JavaScript} \\

\midrule

\multirow{3}{*}{Unblinding} & 2.4 & 2.4 & 0.0 & \textsf{C} \\
& 7.0 & 7.0 & 0.0 & \textsf{Python} \\
& 64.8 & 64.7 & 6.8 & \textsf{JavaScript} \\

\bottomrule
\end{tabular}}
\end{table}

In this section, we benchmark the overhead of capability-related computation in \textsf{C}, \textsf{Python} and \textsf{JavaScript} 
on our testbed. All results are obtained using a single 3.30GHz Intel i3-3120 core. We perform $10,000$ runs to learn the mean, median,
and standard deviation of the computation times for a single capability generation, verification,  information blinding and unblinding. 
We perform experiments for various RSA key lengths. Results shown in Table~\ref{tab:computation_overhead} are obtained when the RSA 
key length is $1024$. The overall computational overhead is small. For instance, it takes an AA ${\sim}230$ microseconds  
in \textsf{C} to compute a pre-capability. And  a single capability verification takes ${\sim}25$ microseconds in \textsf{C}. 
A blinding and an unblinding operation by Tor clients can be finished in ${\sim}3$ and ${\sim}2$ microseconds, respectively, in \textsf{C}. 
The implementations in \textsf{C} and \textsf{Python} have comparable performance since  PyCrypto internally wraps \textsf{C} code.
Although it is more expensive to perform signing and verifying in \textsf{JavaScript}, the overhead of blinding and unblinding operations 
(performed by Tor clients) in \textsf{JavaScript} is comparable with other languages. The AAs, relays and service providers 
can adopt more efficient languages such as \textsf{C} and \textsf{Python} to perform capability generation and verification.

\begin{figure}[t]
\centering
\mbox{
\subfigure[The \# of cores required.\label{fig:precapability_release:a}]{\includegraphics[scale=0.23]{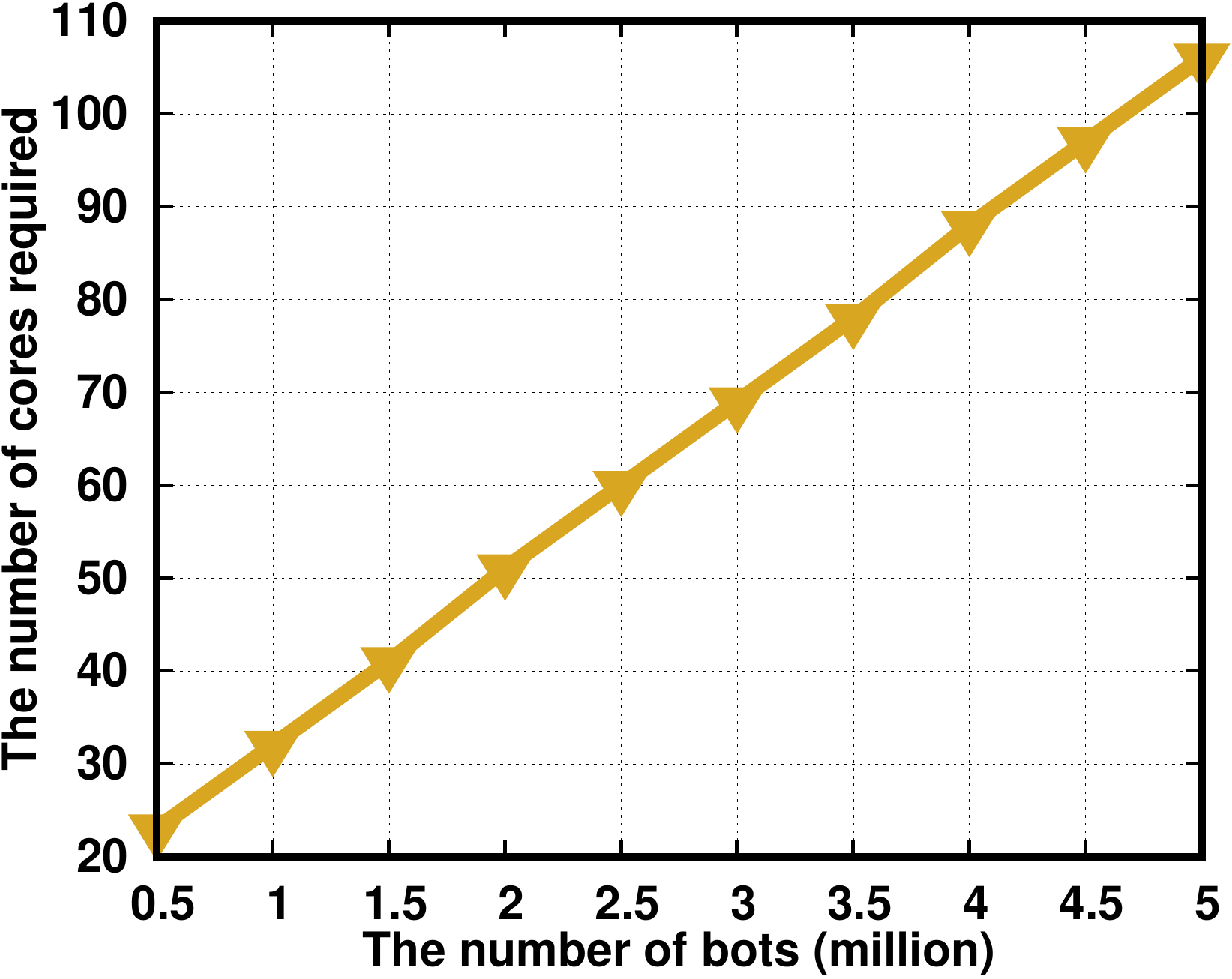}}
\subfigure[Pre-capability issuing latency.\label{fig:precapability_release:b}]{\includegraphics[scale=0.23]{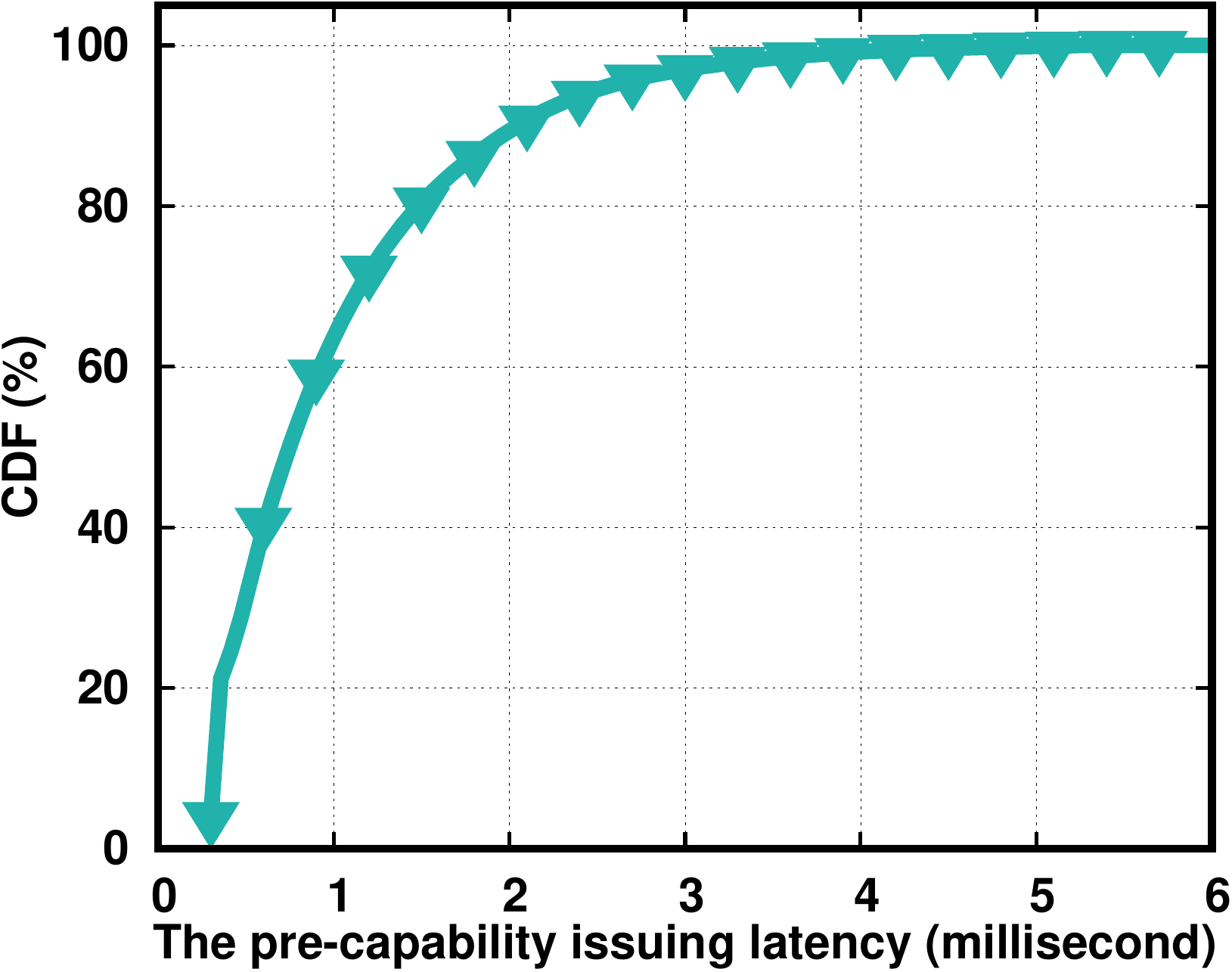}}
}
\caption{Pre-capability release analysis. Figure~\ref{fig:precapability_release:a} shows the number of cores required to prevent the AAs from 
being overwhelmed by numerous requests. Figure~\ref{fig:precapability_release:b} plots pre-capability release latency benchmarked on 
our testbed.}\label{fig:precapability_release}
\end{figure}

\subsection{Deployment Overhead of the AAs}
\label{sec:eva_AAs}
In this section, we evaluate the  deployment overhead of the AAs. We first estimate the compute resources needed by the AAs 
to support the pre-capability computation for all Tor users. Then we evaluate the pre-capability issuing latency using 
the AAs deployed on our testbed. 

\parab{Collective Compute Resources Needed.} 
To estimate compute resources required from the AAs, we need to estimate the amount of 
pre-capability requests from all Tor clients. Recall that each AA server maintains two rate limiters 
for issuing pre-capabilities: $r$ for issuing site-specific pre-capabilities and $q$ for issuing 
relay-specific pre-capabilities (\S~\ref{sec:rate_limiting}). We estimate $r$ and $q$ using the live 
Tor measurement results in~\cite{PrivCount}. In particular, during  a 10-minute interval, PrivCount~\cite{PrivCount} 
estimates that each Tor client opens about 24 TCP streams and $4$ circuits. Further, PrivCount~\cite{PrivCount} 
counts about $710,000$ unique clients during a 10-minute interval. Combing these, we estimate the collective 
pre-capability request rate from all Tor clients is about $44,000$ per second. Since it takes one core $0.23$ milliseconds to 
issue one pre-capability, the AAs collectively need about $11$ cores to support the entire set of current Tor users. 

In practice, the AAs should be over-provisioned to prevent an adversary from overwhelming them  via 
massive pre-capability requests. We clarify that such flooding attack aims to exhaust the AAs'  
compute resources rather than their network bandwidth  (bandwidth-oriented volumetric DDoS attacks can be 
prevented by hosting the AAs on well-provisioned cloud~\cite{MiddlePolice}). Figure~\ref{fig:precapability_release:a} 
plots the number of cores required in order to withstand different-sized botnets. The results 
show that the AAs need about $100$ cores to withstand a 5-million node botnet.

\parab{Pre-Capability Release Latency.} 
We now evaluate pre-capability release latency using the AAs deployed on our testbed (\S~\ref{sec:implementation:AAs}). 
We define the pre-capability release latency as the time required for an AA server to process a pre-capability request, excluding  
networking latency and other user-introduced latency (\eg the time required for solving challenges). 
We provision eight servers on our physical testbed as AA servers in this experiment. We double-threaded each AA server so that 
the eight AA servers collectively have 16  cores. To emulate pre-capability requests from the entire set of Tor users, 
we develop a requester that generate requests at the rate of $44,000$ per second. To send each request, the requester 
randomly picks one of the 8 AA servers. The results, plotted in Figure \ref{fig:precapability_release:b}, 
show that the pre-capability release latency is less than few milliseconds, which is over 2 orders of magnitude smaller 
than the typical circuit creation time (0.7s based on our live Tor measurements in \S~\ref{sec:appendix:live_tor}).

\subsection{Enforcing Site-Defined Policies}\label{sec:eva_site_policies}
In this section, we demonstrate that \sys enables a site to enforce site-defined access policies for  
anonymous Tor connections. As a result, a site can explicitly bound 
a strategic adversary's service request rate using self-defined parameters.

\parab{Access Policies.}
For evaluation purpose, we assume that the site assigns equal weights to both types of 
capability seeds, \ie $c_0$ (for CAPTCHA solutions) and $c_1$ (for puzzle solutions) 
in Equation~(\ref{eqn:site_rules_solved}) are the same. However, the actual costs, 
denoted by $c_0'$ and $c_1'$, can be different from $c_0$ and $c_1$. 
Further, base on the measurements in  \cite{botnet_cost, CAPTCHA_IN_ECO}, 
we assume $c_0'$ is close to $\lambda$ (the cost for obtaining one network identity). 

We evaluate three strategies that a site may use to define its access policies. The first 
strategy (referred to as \emph{basic strategy}) is that the site accepts all Tor-emitted requests 
with valid capabilities. In the second strategy (referred to as \emph{rate limiting strategy}), 
the site enforces a maximum service request rate $r_{\max}$ for \emph{all} Tor-emitted requests 
with valid capabilities. In the third strategy, besides rate limiting, the site further performs weighted 
fair queuing (WFQ) to serve requests: rather than serving all valid Tor-emitted requests in one FIFO queue, 
requests with capabilities obtained using CAPTCHA solutions and puzzle solutions are served in two separate 
FIFO queues weighted equally. The third strategy (referred to as \emph{WFQ strategy}) prevents 
one type of seed from overwhelming the other one. 

\begin{figure}[t]
	\centering
	\mbox{
		\subfigure{\includegraphics[scale=0.4]{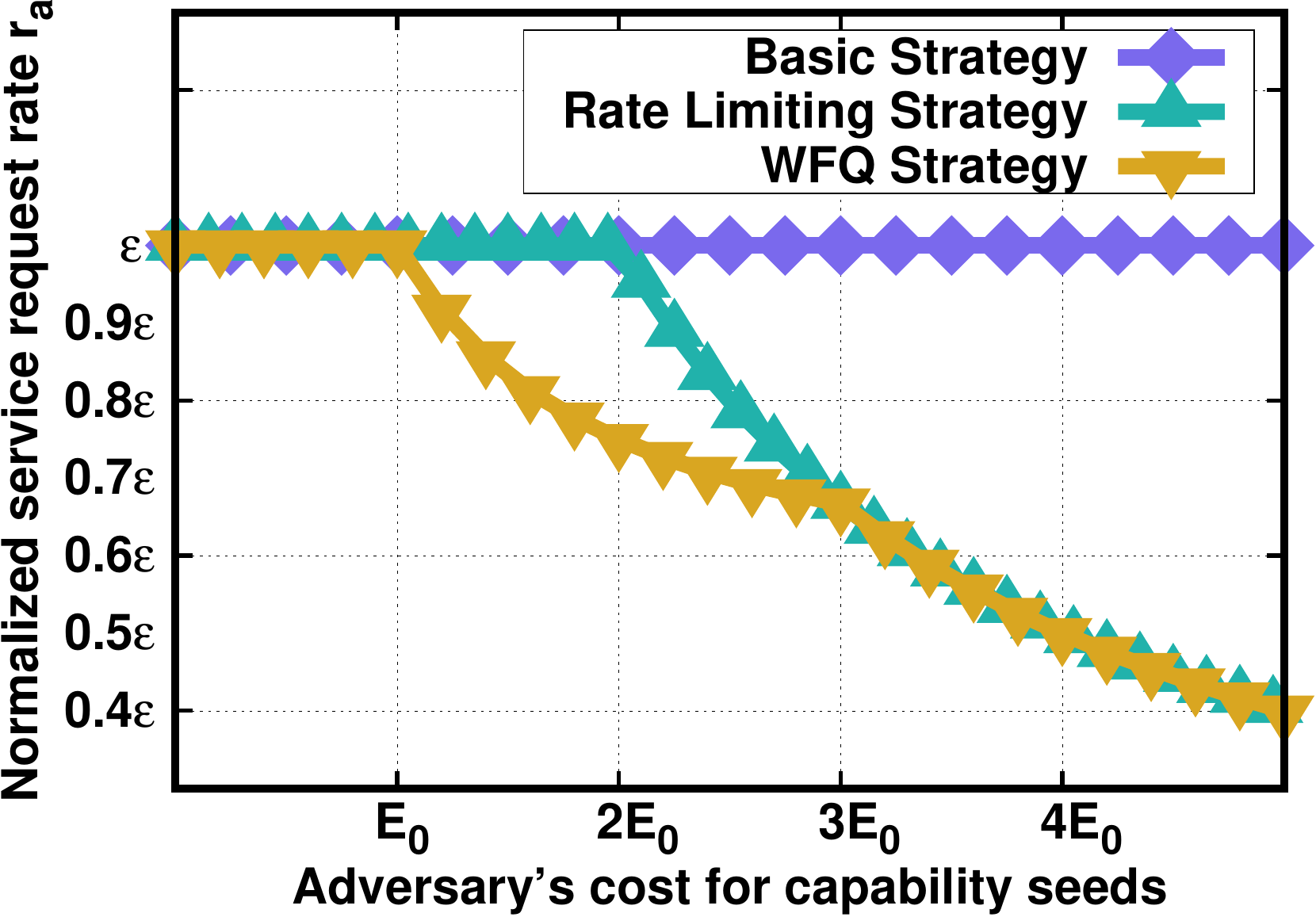}}
	}
	\caption{\sys enables a site to bound an adversary's service request rate using 
		self-defined parameter $\epsilon$. $E_0$ is the adversary's cost at the first point of diminishing returns  
		when the site uses the WFQ strategy.}	\label{fig:policy_enforcement}
\end{figure}

\parab{Policy Enforcement.} We now study an adversary's service request rate through Tor when it invests a certain 
amount of money on acquiring capability seeds. Define  $ k = {c_0'}/{ c_1'}$. 
We first present the evaluation results for $k=0.5$ in Figure~\ref{fig:policy_enforcement} and then 
extend our discussion to arbitrary $k$. For any amount of investment, the adversary's service request 
rate through Tor (denoted by $r_a$) is normalized to the service request rate obtained 
when the adversary connects to the site directly without using Tor.

Since $c_0' {<} c_1'$ given $k{=}0.5$, the adversary's optimal strategy is spending all investment on solving CAPTCHAs. 
Thus, we have $r_a {=} \epsilon$, where $\epsilon$ is the site-configurable parameter defined in Equation~(\ref{eqn:site_access_rule}). 
When the site adopts the basic strategy, $r_a$ remains the same as the adversary increases its investment. 
However, for the other two strategies, $r_a$ will reach a point of \emph{diminishing returns} as the adversary's 
investment further increases (as shown in Figure~\ref{fig:policy_enforcement}). In particular, when the site adopts 
the rate limiting strategy, the point of diminishing returns is reached when the collective service request 
rate from the adversary and all legitimate Tor clients exceeds $r_{\max}$. In Figure~\ref{fig:policy_enforcement}, 
we denote the adversary's cost at this point by $2E_0$. After that, further increasing investment actually reduces $r_a$ 
since no more Tor-emitted requests are allowed by the site. 

When the WFQ strategy is adopted, $r_a$ experiences two points of diminishing returns as the adversary's cost increases, 
as shown in Figure~\ref{fig:policy_enforcement}. The first one happens when the collective service request rate from all Tor 
clients using the optimal seed (CAPTCHAs in this evaluation) exceeds $\frac{r_{\max}}{2}$. After this point, the adversary has  
to use sub-optimal seeds in order to further get services. As a result, $r_a$ starts to decline from the optimal rate  
$\epsilon$. The second point of diminishing returns is reached when the collective Tor-emitted service request rate exceeds $r_{\max}$.    

\parab{General Results.} Our further analysis (deferred in \S~\ref{sec:appendix:bound}) proves that for any 
$k$, $r_a \leq \epsilon$ if $k\leq1$ and $r_a \leq k \cdot \epsilon$ if $k\ge1$. Thus, regardless of the actual cost of 
obtaining capability seeds, the adversary's service request rate is bounded by $\Theta(\epsilon)$. 
This result holds no matter which strategy the site adopts and how many types of capability seeds are 
accepted by \sys.

\subsection{Mitigating Botnet Abuse Against Tor}\label{sec:eva_mitigate_Tor_abuse}
In this section, we perform Tor-scale evaluations to demonstrate the following. 

\parab{Mitigating Botnet C\&C Abuse in Tor.} 
Based on the real data collected from the large scale botnet C\&C abuse against Tor happened during Aug-Sep 2013, we show that 
Tor clients suffered from very high circuit  failure rates (${\sim}$40\%) during the abuse. Then we demonstrate that \sys 
effectively mitigates the abuse by reducing failure rates by ${\sim}$74\% (\S~\ref{sec:evaluation:CC}). 

\parab{Mitigating Tor-targeted DDoS Attacks.} 
We demonstrate that \sys significantly increases Tor's resilience against cell flooding attacks that aim 
to paralyze the Tor network via excessive circuit creation requests (\S~\ref{sec:evaluation:tor_ddos}).

\emph{Tor-scale Simulator.} We aim to show that \sys is able to mitigate the harm
that a multi-million botnet can do to Tor.  While we do have a \sys
implementation that runs on Shadow~\cite{shadow} (see \S~\ref{sec:implementation_tor}), 
we would run into scalability issues with simulating millions of Tor clients.  Further, Shadow is unable to help us
simulate the cryptographic overhead that botnets would impose on Tor relays~\cite{shadow_QA}.  Due to these 
shortcomings, we developed our own simulator.  We faithfully implement Tor's path selection
algorithm~\cite{path_spec} and validate the correctness of our implementation
by comparing relays' selection probability with the ones published by Tor
Atlas~\cite{tor_atlas}.  We sampled the computational capacity of relays from
Barbera \ea's work that was based on live Tor measurements~\cite{cellflood}.

\begin{figure}[t]
	\centering
	\mbox{
		\subfigure{\includegraphics[scale=0.4]{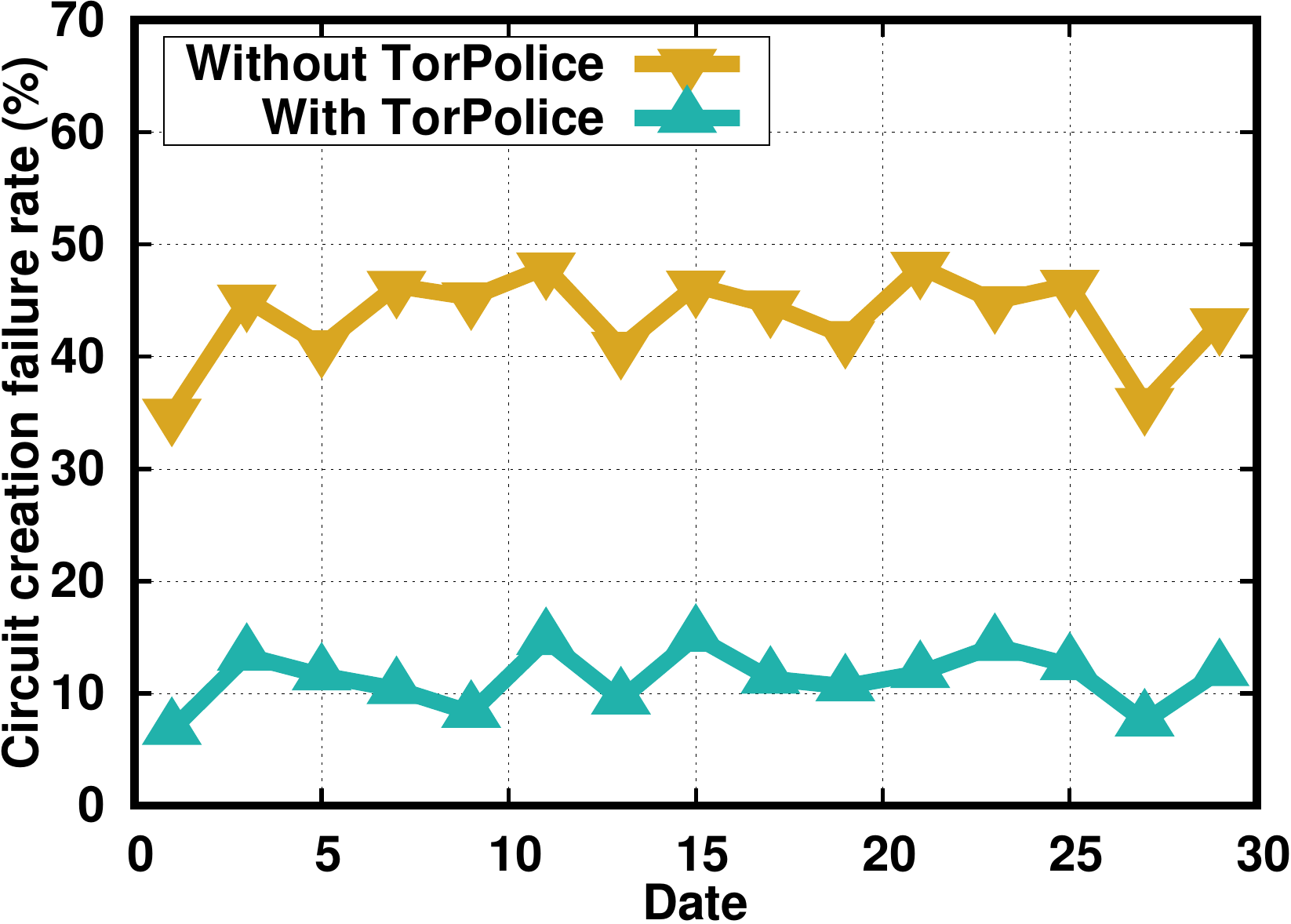}}}
	\caption{Circuit creation failure rates when Tor faced a multi-million node botnet C\&C abuse. 
		\sys can reduce the average failure rate by ${\sim}74$\%. }	\label{fig:circuit_failure_rate}
\end{figure}

\subsubsection{Mitigating Botnet C\&C Abuse}\label{sec:evaluation:CC}
~\newline
In this section, we study the botnet C\&C abuse that happened during Aug-Sep 2013, when Tor's 
daily estimated users rapidly increased from 1 million to 6 million. We first show that Tor clients experienced very high 
circuit creation failure rates when Tor was  under this abuse. Then we show \sys effectively mitigates such abuse.

\parab{Circuit Creation Failure Rate.} 
We use the data collected by Tor to estimate the amounts of circuit creations initiated  
by the botnet during the C\&C abuse. To improve readability, we defer the detailed 
mathematical modeling to \S~\ref{sec:appendix:modeling}. Due to the massive circuit creations 
by the botnet, compute  resources of many relays are exhausted, resulting in very high circuit 
creation failure rates, as depicted in Figure~\ref{fig:circuit_failure_rate}. Such high failure rates are 
caused by the following vicious cycle. When the abuse starts, Tor relays begin to drop requests due 
to the lack of compute resources. These initial failures force the bot clients to continuously send requests 
until their circuits are successfully created, which further increases the network load. The resulting 
consequences are that the botnet still managed to use Tor as its primary C\&C channel after numerous trials 
whereas Tor is less usable for  legitimate Tor users since it could require tens of trials before a circuit is 
finally created, resulting a high user-perceived latency. 

\parab{Mitigating Botnet Abuse in Tor.} 
The root cause of such high circuit creation failure rates is that bot clients deviate from 
typical Tor usage pattern, \ie they initiate numerous circuit creation requests without any limitation. As described in 
\S~\ref{sec:design_relay_specific_cap}, \sys allows Tor to explicitly control the circuit creation by any Tor clients. 
Thus, to counter this  abuse, Tor sets its access policies $q_i$ such that the maximum rate at which a client can create 
circuit is $4$ per ten minutes (aligned with live Tor measurements in \cite{PrivCount}). We plot the resulting circuit 
creation failure rates after enforcing the access policies in Figure~\ref{fig:circuit_failure_rate}. Clearly, \sys effectively 
eases the network load and reduces the average failure rate from ${\sim}41\%$ to ${\sim}10\%$, a ${\sim}74$\% reduction. 

In response to the C\&C abuse, the Tor project released a new version (0.2.4.17-rc) that prioritizes the processing of 
onionskins using the \texttt{ntor}~\cite{ntor} protocol since the bot clients used an older version without \texttt{ntor} support. 
Tor's countermeasure reduced the circuit failure ratio to about $20\%$~\cite{hidden_service_abuse}. We clarify that a strategic botnet
could circumvent Tor's defense by changing adaptively (\eg upgrading software). However, \sys offers long-term countermeasures 
that can handle strategic botnets.

\vspace{0.1in}
\subsubsection{Mitigating Tor-targeted DDoS Attacks}
~\newline
As noted in~\cite{hopper2013protecting}, a general concern of attacking a botnet by Tor in case of abuse 
(\eg by blacklisting its hidden servers) is that it may lead to retaliation. For instance, a botnet can easily 
paralyze Tor via excessive circuit creation requests. According to Tor design~\cite{path_spec}, a Tor 
client drops its current guard relay when  circuit failure rate measured by the client is above 
30\%. Via massive circuit creation requests,  an adversary can easily exhaust computation resources 
of the entire set of relays, driving circuit failure rates much higher than this threshold. 
Figure~\ref{fig:Tor_targeted_DDoS} demonstrates  this vulnerability: a moderate-sized 
botnet with hundreds of thousands of bots is enough to cause very high circuit failure ratios. 
When Tor is protected by \sys, however, even a multi-million node botnet can only cause very limited failure rates for 
the current Tor network (represented by the consensus published on May 1st 2017). 

\label{sec:evaluation:tor_ddos}
\begin{figure}[t]
	\centering
	\mbox{
		\subfigure{\includegraphics[scale=0.4]{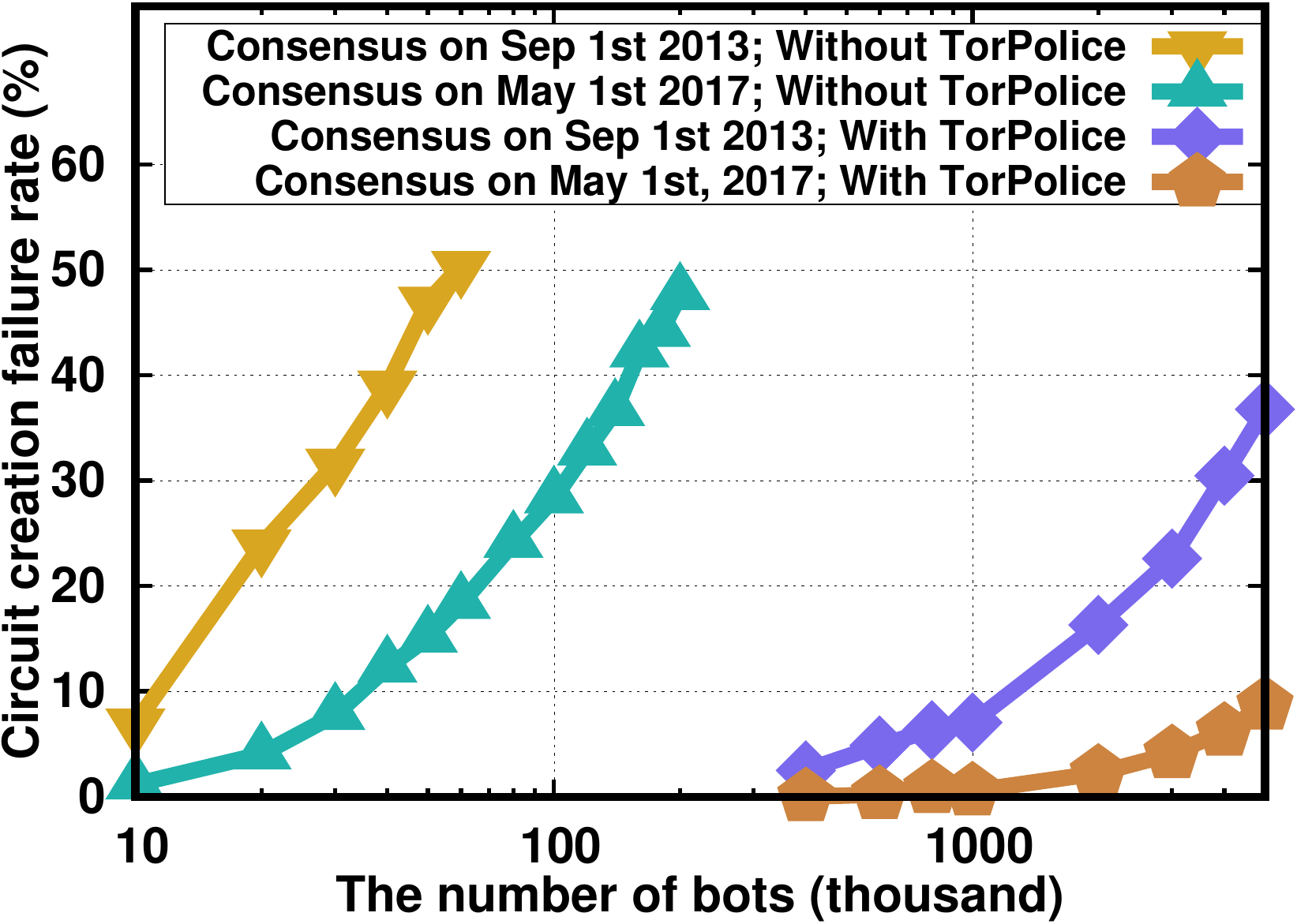}}}
		\caption{Without \sys, a moderate-sized adversary can paralyze Tor via cell flooding attacks. 
		\sys can effectively mitigate this vulnerability.}\label{fig:Tor_targeted_DDoS}
\end{figure}

\balance
\section{RELATED WORK}\label{sec:related}
In this section, we discuss closely related work. 

\noindent\textbf{Capabilities in the Internet.} 
Capability schemes (\cite{SIFF,tva,portcullis,netfence,MiddlePolice}) have been proposed to protect the Internet from 
DDoS attacks. In these approaches,  capabilities specify certain traffic policing rules 
and meanwhile carry cryptographic signatures to ensure correctness. Victims (\eg servers or congested routers) 
police traffic based on received capabilities to stop attacks. Different from \sys, Internet capability designs  
do not consider privacy. Further, some of these capability schemes are difficult to deploy  since they require modification of 
Internet core and client network stack. On the contrary, \sys is readily deployable in Tor with small overhead.

\parab{Anonymous Blacklisting Systems.} 
Anonymous blacklisting systems~\cite{accountable_ano} allow service providers to maintain a ``blacklist'' to explicitly 
block abusive users while serving non-abusive users without breaking anonymity. Anonymous blacklisting systems can be 
categorized into three broad groups: the pseudonym systems~\cite{chaum, chen, damgaard, stubblebine, lysyanskaya}, 
the Nymble-like systems~\cite{AA1, AA2, AA3,AA4,AA5}, and the revocable anonymous credential systems based on  
zero-knowledge proofs~\cite{AA6,AA7,AA9}. These systems either offer pseudonymity instead of full anonymity 
or require a trusted or semi-trusted authority to provide anonymity. \sys is not designed to be a new anonymous blacklisting 
system. Rather, \sys is explicitly designed for Tor, focusing on proposing  a capability-based access control 
framework that allows service providers and Tor to enforce access rules to throttle various botnet abuses while still serving 
legitimate Tor users properly.  Further, \sys's trust is more distributed since its AAs are fully distributed and each of them 
only has a partial view of the entire system. 

\parab{Relay Incentives.}
Tor relay incentive mechanisms~\cite{gold_star, braids, lira, torpath}
are proposed to recruit more relays for the Tor network. Gold Star~\cite{gold_star}, BRAIDS~\cite{braids} 
and LIRA~\cite{lira} incentivize Tor clients to relay anonymous traffic by offering them prioritized Tor services.  
TorPath~\cite{torpath} instead pays relays Bitcoins. By allowing relays to redeem their received 
relay-specific capabilities for various benefits, \sys provides a general framework to support these incentive mechanisms. 
For instance, to support the similar incentive mechanism in \cite{braids}, a relay $\mathbb{R}$ can 
redeem its received generic capabilities to obtain ``prioritized relay-specific capabilities'' from the AAs. 
Then $\mathbb{R}$, as a client, can subsequently spend these prioritized capabilities to create 
premium Tor circuits to get premium services. 
\section{CONCLUSION}\label{sec:conclude}
In this paper, we present \sys, the first privacy-preserving access control framework that allows service providers and 
Tor to enforce self-selectable access policies on anonymous Tor connections so as to throttle various botnet abuses 
while still providing service to legitimate Tor users. \sys leverages blindly signed network capabilities to 
preserve the privacy of Tor users.  We implement a prototype of \sys, and perform extensive 
evaluations to validate \sys's design goals. 

\clearpage
\renewcommand*{\bibfont}{\footnotesize}
\balance
{
	\bibliography{paper}
	\bibliographystyle{acm}
}

\clearpage
\balance
\section{APPENDIX}\label{sec:appendix}

\subsection{Distributed Puzzle Systems}\label{sec:appendix:puzzles}

\begin{figure}[t]
	\centering
	\mbox{
		\subfigure{\includegraphics[width=0.98\linewidth]{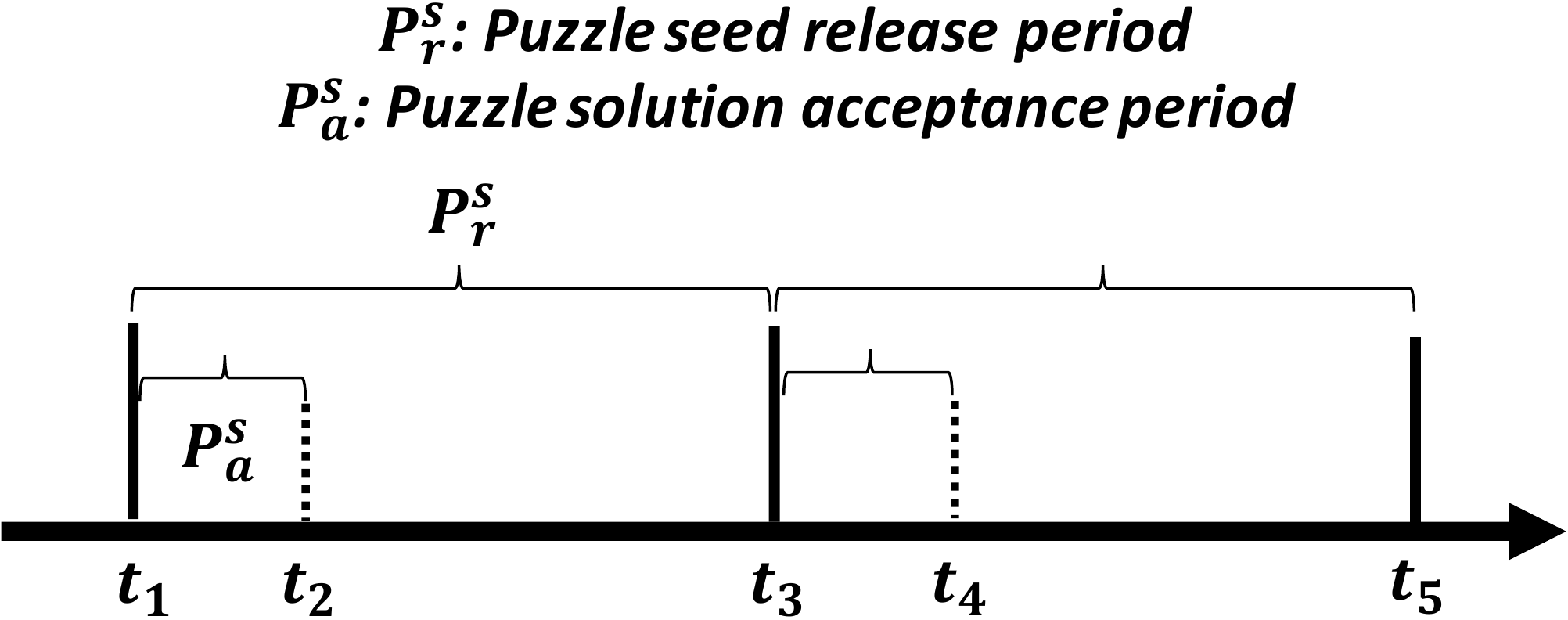}}
	}
	\caption{
		\sys's puzzle system works on the basis of $P^s_r$ and $P^s_a$: 
		One fresh puzzle seed is released in each $P^s_r$ and puzzle solutions 
		are redeemable at AAs for pre-capabilities only within $P^s_a$ in each $P^s_r$.}
	\label{fig:puzzle_period}
\end{figure}

To support puzzle solutions as capability seeds, \sys introduces a distributed puzzle system for distributing computational puzzles. 
Compared with prior systems (\eg Portcullis~\cite{portcullis}), the novelty of \sys's puzzle system is that it can explicitly bound the 
CPU usage by any client for solving puzzles. In particular, legitimate clients do not prefer to use all their CPU cycles to compute puzzles. 
However, automated bots do. To enable access control, prior systems (\eg Portcullis~\cite{portcullis}) would need to 
prioritize requests based on the difficulty level of puzzles since otherwise the bots could overwhelm the system by solving easy puzzles. 
Thus, to compete with automated bots, legitimate clients are forced to use all their CPU cycles to solve puzzles while still at the risk of 
being denied access when competing with automated bots with significant computation resources. On the contrary, by explicitly bounding the 
percentage of CPU cycles allowed for solving puzzles, \sys's can bring all bots down to the percentage that normal clients prefer to 
use for puzzle computation, which significantly reduces the computation disparity between legitimate clients and automated bots. 

\vspace{0.1in}
\subsubsection{Puzzle System Overview}~\\
All computational puzzles are computed based on a series of \emph{puzzle seeds} that 
are released periodically. Tor's existing directory authorities (DAs), for instance, can be used for releasing puzzle seeds. 
The puzzle system works on the basis of two periods, as illustrated in Figure~\ref{fig:puzzle_period}. 
In each \emph{puzzle seed release period} ($P^s_r$), one fresh puzzle seed is released at the 
beginning of the period and no more puzzle seeds will be further released in this period. 
The seed release algorithm (\S~\ref{sec:seed_release}) ensures that the puzzle seeds cannot not be 
pre-computed and each valid seed requires the participance from a majority of all DAs. 

In each $P^s_r$, all puzzles are computed based on the puzzle seed released in the current $P^s_r$. Thus, it is impossible 
to pre-compute solutions for future puzzles even if the puzzle generation algorithm (\S~\ref{sec:puzzle_generation}) is public. 
Similarly, solutions to previous puzzles cannot be used as valid capability seeds in the current $P^s_r$. To bound a client's 
CPU usage for solving puzzles, all puzzle solutions have to be returned to the AAs within the \emph{puzzle solution 
	acceptance period ($P^s_a$)}.  Late solutions will not be accepted. Thus, the percentage of CPU usage for solving puzzles 
is bounded by $P^s_a/P^s_r$. 

\subsubsection{Puzzle Seed Release}\label{sec:seed_release}~\\
The puzzle seed release process requires the participation of at least $n$ of Tor's DAs. 
$n$ should include the majority of DAs to avoid centralization, and meanwhile 
it does not need to include all DAs to be fault-tolerant, similar to how the Tor network consensus is released. 
Specifically, each DA contributes its part for puzzle seed by generating a random nonce signed 
by its public key along with a timestamp, as formulated below.
\begin{equation}
s_i = n_i ~|~ t_s ~|~ \mathcal{S}_{D_i}, 
\end{equation}
where $n_i$ is the nonce, $t_s$ is the timestamp set to the starting time of the current $P^s_r$ (\eg $t_1$ in 
Figure~\ref{fig:puzzle_period}) to indicate the freshness of the $s_i$, and $\mathcal{S}_{D_i}$ is the $i$th DA's 
signature to prove the integrity of $s_i$.  To construct a puzzle seed for the current $P^s_r$, clients need to 
concatenate at least $n$ authentic seed pieces issued by $n$ distinct DAs. We note that Tor's existing random 
value generator~\cite{Tor_random_number} does not fit for \sys since it computes a fresh value every day 
whereas \sys's puzzle seeds need to be released more frequently to improve usability, as explained in \S~\ref{sec:puzzle_acceptance}. 

\subsubsection{Puzzle Computation}\label{sec:puzzle_generation}~\\
Assume that in the $k$th $P^s_r$, the puzzle seed obtained by a client is $h_k$. Then one puzzle is computed as follows. 
\begin{equation}\label{equ:puzzle}
p = H(p_{stub}) = H(h_k ~|~ r ~|~ s ~|~ \mathcal{F}), 
\end{equation}
where $H()$ is a public cryptographic hash function (\eg SHA512), $r$ is a random 128-bit cryptographic nonce generated by the client to 
ensure the uniqueness of the puzzle, $s$ is a 128-bit solution to the puzzle $p$, and $\mathcal{F}$ is the fingerprint of the 
AA selected by the client to redeem the puzzle solution. $s$ is considered as a valid solution to $p$ only if $\frac{p}{2^{L_0} - 1} < p_p$, 
where $L_0$ is the length of the hash function's output. $p_p$ is a parameter for tuning the puzzle system. We provide detailed 
discussion for $p_p$ in \S~\ref{sec:puzzle_system_analysis}. The $\{h_k~|~r~|~s~|~\mathcal{F}\}$ is defined as the puzzle stub $p_{stub}$, 
which will be sent to the AA (specified by $\mathcal{F}$) to serve as a capability seed. Incorporating $\mathcal{F}$ into the puzzle 
design prevents the client from redeeming a single puzzle solution at multiple AAs. 

\subsubsection{Puzzle Solution Acceptance Period}\label{sec:puzzle_acceptance}~\\
In order to be treated as valid capability seeds, puzzle stubs must be returned to the AAs within the puzzle solution acceptation period 
$P^s_a$, \ie an AA only accepts puzzle solutions received within  $[t_1, t_2]$ in the current $P^s_r$ (and equivalently $[t_3, t_4]$ in the next 
$P^s_r$), as illustrated in Figure~\ref{fig:puzzle_period}. The slot $[t_2, t_3]$ is the \emph{cool-down} period, during which no 
puzzle stubs are accepted. As a result, a client, regardless of whether it is bot or a legitimate Tor user, can spend at most  
$\frac{P^s_a-P_c}{P^s_r} < \frac{P^s_a}{P^s_r}$ percent of its CPU cycles on solving computational puzzles, where $P_{c}$ is the 
networking latency for retrieving the puzzle seed and returning the puzzle stub to the AAs. 

Because of the cool-down period in each $P^s_r$, clients who missed the current $P^s_a$ (either because they do not 
compute valid solutions on time or they obtain the puzzle seed later than $t_2$) will have to wait until the starting of the 
next $P^s_r$ (\eg $t_3$) to get another chance to solve new puzzles. As a result, $P^s_r$ needs to be small (\eg at most few 
minutes) to avoid introducing usability problems.

\subsubsection{Puzzle Solution Verification}\label{sec:puzzle_verification}~\\
To initiate puzzle verfication, a client sends the puzzle stub the corresponding AA. Upon the reception of puzzle stub, the AA 
performs the following checks to validate the puzzle stub. \first The puzzle stub is returned within the current $P^s_a$. \second The puzzle 
stub is computed based on the fresh puzzle seed released in the current $P^s_r$. \third The puzzle stub encloses its own fingerprint. 
\four The puzzle solution is valid, as defined in \S \ref{sec:puzzle_generation}. \five The puzzle stub has not been spent before. 
To enforce the fifth rule, the AA needs to cache all spent puzzle stubs. The cache space is bounded as the AA can erase 
the puzzle stubs received in previous periods since they are no longer spendable. 

\subsubsection{Puzzle System Analysis}\label{sec:puzzle_system_analysis}~\\
In each $P^s_r$, the number of puzzles solved by a  client follows the following binomial distribution
\begin{equation}\label{equ:A_seed}
\mathcal{G}_{seed} \sim \mathcal{B} \bigg( p_p, \floor*{\frac{P^s_a - P_c}{t_p}} \bigg),
\end{equation}
where $\mathcal{G}_{seed}$ denotes number of solved puzzles, $p_p$ is the probability that one attempt (\ie a hash computation)
produces a valid puzzle solution according to the rule in \S~\ref{sec:puzzle_generation}, $t_p$ is the amount of time it takes for 
the client to attempt a single hash computation and $\floor*{\frac{P^s_a - P_c}{t_p}}$ is the number of attempts $\mathbb{U}$ 
can make within the allowed time period.

\sys can control $p_p$ and $P^s_a$ to affect the numeric values of $\mathcal{G}_{seed}$. In particular, $p_p$ should be chosen
such that with high probability (\ie 0.99) a client with slow computation speed (\eg a mobile device released few years ago) 
and slow network connection (\eg 99th percentile of the RTT measured by CAIDA~\cite{RTT_dataset}) can correctly solve one 
puzzle so as to produce a valid capability seed. In particular, given that the slow device's computation speed is $t_p^0$ and 
the 99th percentile network latency is $P_c^{99th}$, $p_p$ is selected such that $1-(1-p_p)^{N_0} > 0.99$, where 
$N_0 = \floor*{\frac{P^s_a - P_c^{99th}}{t_p^0}}$.

\subsection{Trans-Capability Design Design}\label{sec:appendix:trans_capability}
In this section, we detail the capability exchange protocol discussed in \S~\ref{sec:capability_trans}. 
Since a Tor orion server (itself runs a Tor client) needs to open many Tor 
circuits in order to serve all its clients (\ie OS-clients) via \sys-enhanced circuits,  
enforcing per-seed rate limiting for pre-capability release may limit the availability of 
Tor OSes. To address this issue, we design the following capability exchange protocol. 

In particular, a OS-client needs to request a new type of capability, \ie \emph{trans-capability}, from the AAs. 
During the hidden service set up process, along with the information about Rendezvous Point, the OS-client 
sends a trans-capability to one of the OS's Introduction Points. The OS subsequently 
redeems the trans-capability at the AAs for new pre-capabilities, which can be used for generating new relay-specific 
capabilities. The trans-capability, accounted on the capability seed of the OS-client, anonymously informs the AAs that 
the OS needs to create a new circuit to serve the OS-client. 

\parab{Trans-capability Computation.}
Each trans-capability is computed based on \emph{pre-trans} issued by the AAs. By default, all Tor clients request pre-trans  
to prevent the AAs from knowing whether a client has the intention to visit OSes. Clients that do not visit any OS simply ignore 
the received pre-trans. To request pre-trans, the OS-client sends $\{\varsigma ~|~ n ~|~ t_s\}^b$ to the AAs, where 
$\varsigma$ is a pre-defined system value  for trans-capability. No information about the OS is enclosed to protect the OS's privacy. 
The AAs then compute blind signatures over the information to produce a pre-trans. Finally, the OS-client unblinds the received 
pre-trans to produce a trans-capability.

\parab{Redeeming Trans-capability.}  
The process of redeeming trans-capability is identical to how a Tor client requests relay-specific pre-capabilities  
using its capability seed (the trans-capability now serves as a new capability seed). The AAs reject all unauthentic, 
expired or spent trans-capabilities.

\subsection{Live Tor Interaction}\label{sec:appendix:live_tor}
In this section, we continue our discussion in \S~\ref{sec:implementation_tor} for live Tor network interactive. 

\begin{figure}[h]
	\centering
	\mbox{
		\subfigure{\includegraphics[scale=0.43]
			{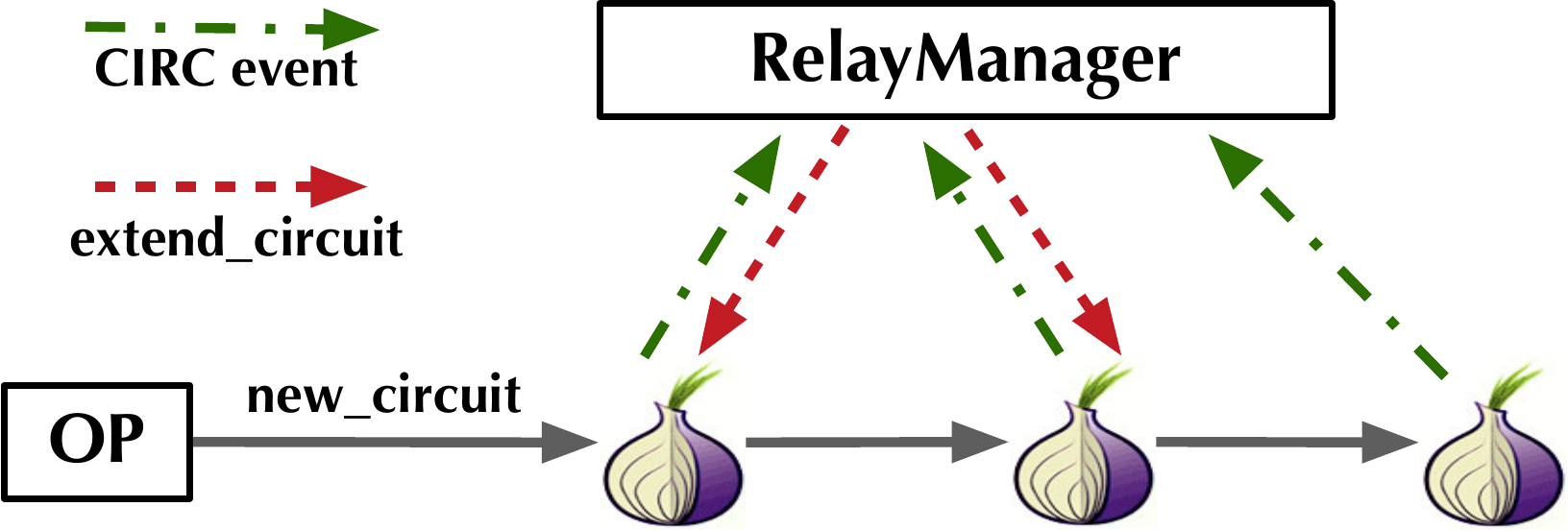}}
	}
	\caption{The design of \textsf{RelayManager}.
	}
	\label{fig:relaymanager}
\end{figure}

\parab{\textsf{RelayManager} Design.}
Since the live Tor relays are capability-agnostic (\ie they do not run our modified Tor source code described 
in \S~\ref{sec:implementation_tor}), we cannot create \sys-enhanced Tor circuits directly through live Tor relays. Thus, we implement 
another prototype to interact with live Tor relays during capability-enhanced circuit creation. The prototype relies on 
the Tor control protocol~\cite{tor_control}. In particular, on the OP, we implement a \textsf{RelayManager} based on the Stem~\cite{stem} library   
to execute the capability-related operations, as illustrated in Figure~\ref{fig:relaymanager}. The \textsf{RelayManager} controls the OP's 
circuit creation to ``embed'' capabilities into live Tor circuit creations.  In particular, after the OP selects relays for its circuit. 
the \textsf{RelayManager} blinds relay information, requests pre-capabilities from our deployed AAs 
described in \S~\ref{sec:implementation:AAs}, and then computes relay-specific capabilities. 
Whenever the OP's partially-built circuit reaches a relay $\mathbb{R}_n$ in the live Tor network, 
the \textsf{RelayManager} receives a \textsf{CIRC} event callback from the Tor control protocol.  As $\mathbb{R}_n$ is 
capability-agnostic, we offload  capability verification to the \textsf{RelayManager}. Upon validation, the \textsf{RelayManager} sends an 
\textsf{extend\_circuit} command through the Tor control protocol to continue circuit creation. Otherwise, the \textsf{RelayManager}  
terminates circuit creation by issuing a \textsf{close\_circuit} command. 

We clarify that \textsf{RelayManager} should not be used in real-world deployment since the capability 
verification is offloaded to the OP. Rather, to securely embrace \sys, the Tor relay source code needs to be modified properly, 
as proposed in \S~\ref{sec:implementation_tor}.

\begin{figure}[t]
	\centering
	\mbox{
		\subfigure{\includegraphics[scale=0.45]
			{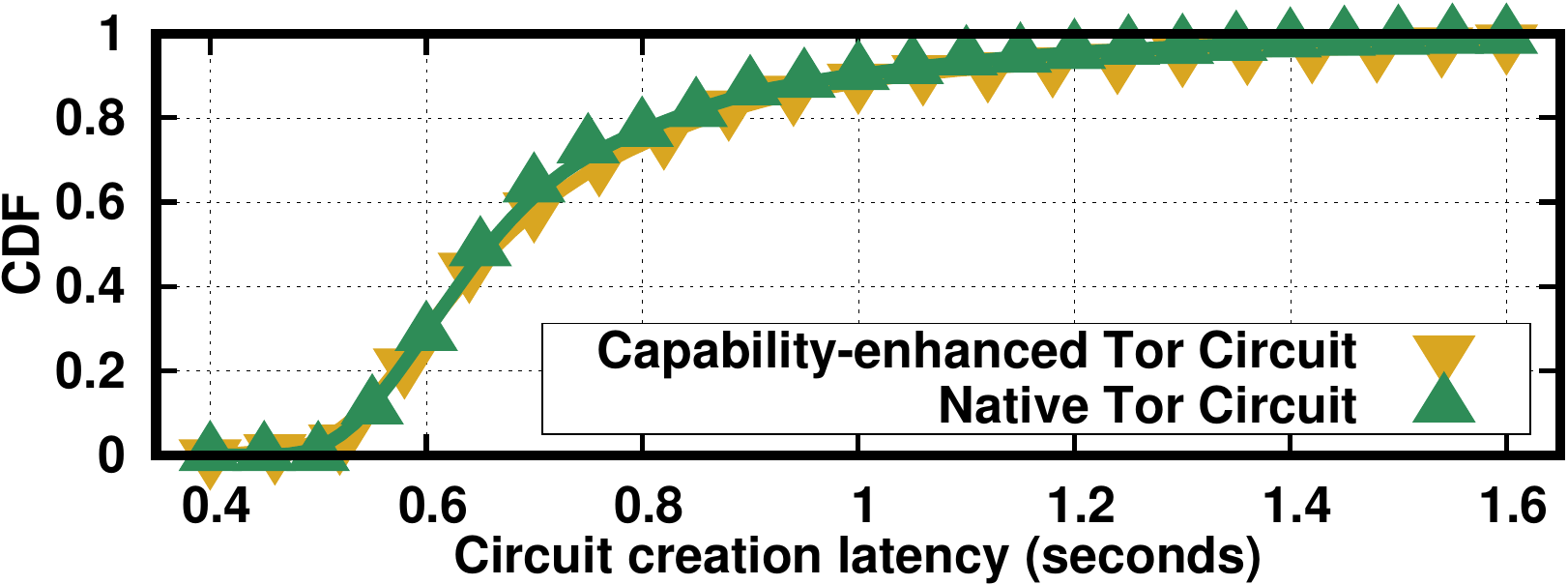}}
	}	
	\caption{\sys introduces negligible overhead during capability-enhanced Tor circuit creation.}
	\label{fig:live_tor}
\end{figure}		

\parab{Latency Measurement.} To validate the design of \textsf{RelayManager}, we instruct our Tor clients to create circuits 
through live Tor relays and use \textsf{RelayManager} to embed our capability-related operations into the creation process.  
Meanwhile, we measure circuit creation latencies to quantify the overhead caused by the our capability design.  
Figure~\ref{fig:live_tor} plots the CDF of measured latency with and without relay-specific capabilities. The results show that \sys 
introduces negligible overhead. This is because  a capability verification operation merely takes ${\sim}0.03$ milliseconds 
(\S~\ref{sec:evaluation_capability_overhead}) whereas the median time for creating a (native) Tor circuit is ${\sim}0.7$s. 
We clarify that since the capability verification is offloaded from live Tor relays to the OP, there could be some marginal errors 
in our latency measurements since live Tor relays may simultaneously process multiple circuits requests.

\subsection{Enforcing Site-Defined Policies}\label{sec:appendix:bound}
In this section, we continue the analysis in \S~\ref{sec:eva_site_policies} to prove that an adversary's service request rate $r_a$ 
is always bounded by $\Theta(\epsilon)$. In particular, when $k \leq 1$, CAPTCHA solutions are the optimal seeds since $c_0' < c_1'$. 
Therefore, the optimal $r_a$ is obtained when the adversary spends all investment on purchasing CAPTCHA solutions. 
Thus we have $r_a = \epsilon$. When $k \ge 1$, solutions to computational puzzles become the optimal seeds. In this case, 
the adversary's optimal $r_a$ is $k \cdot \epsilon$ which is obtained by investing all money on solving computational puzzles. 

When the site adopts the basic strategy (\ie accepts all Tor-emitted requests with valid capabilities), the adversary can continuously use  
optimal capability seeds to maintain its optimal $r_a$ as its investment increases. However, if either the rate limiting strategy or the 
WFQ strategy is adopted, the adversary will reach a point of diminishing returns when the site no longer accepts service requests 
using capabilities that are obtained via the optimal seeds. As a result, the adversary's $r_a$ starts to drop from the optimal value. 

Thus, regardless of $k$ and the site's strategy, $r_a$ is always bounded by $\Theta(\epsilon)$. The above analysis can be easily extended 
to more types of capability seeds. 

\subsection{Modeling for Botnet C\&C Abuse}\label{sec:appendix:modeling}
In this section, we continue the discussion in \S~\ref{sec:evaluation:CC} to detail the mathematical modeling for estimating the 
amount of circuit creation requests in Tor when Tor was under the large scale C\&C abuse happened in September 2013.

We collect the Tor network consensus published from September 1 to September 30, 2013 when the number of estimated daily Tor users 
ranged from 4 million to 6 million. Since one consensus file is published in each hour, we use the average statistics from all 24 
consensus files published in a day to represent the Tor status in that day. We model the relay computation capacity based on 
the live Tor relay measurements in~\cite{cellflood} by uniformly sampling their measurement numbers, excluding 
the samples with low confidence (as defined in their paper). 

The number of circuit creation requests received by Tor is modeled by a Poisson Process with arrival rate $\lambda$. To compute  
$\lambda$, we first estimate the number of unique Tor clients in a time interval and then estimate the number of circuits opened 
by each client in the same interval. Mathematically, we have $\lambda = \frac{N_1 \cdot r_1 + N_2 \cdot r_2}{t_0}$, where $N_1$ and 
$N_2$ are the number of unique legitimate clients and bot clients, respectively, over the time interval $t_0$; $r_1$ and $r_2$ are the 
average number of circuit creations requested by a legitimate client and a bot client, respectively, over the same interval $t_0$. $N_1$ 
and $N_2$ can be estimated using the metric inferred from live Tor measurements in~\cite{PrivCount}. In particular, 
over a 10-minute interval, PrivCount~\cite{PrivCount} counts 710 unique clients when Tor's daily estimated user 
count is 1.75 million, which indicates the client population  \emph{turnover rate} $\rho$ is about 2.5. Since the methodology 
used by Tor to estimate its daily user has not changed since 2013, we assume that $\rho$ obtained in 2016 is also applicable in 2013. 
Thus, over a 10-minute interval, we have $N_1 = N_1^L / \rho$ and $N_2 = (N_2^T - N_1^L)/\rho$, 
where $N_1^L$ and $N_2^T$ are the number of legitimate daily users and total daily users estimated by Tor, respectively. 
We estimate $N_1^L$ as 1 million, which was the estimated daily Tor user number right before the abuse started 
in August 2013. $N_2^T$ can obtained directly from the data published by Tor~\cite{tor_metric}. Further, PrivCount~\cite{PrivCount} counts  
about $4$ circuits opened for each Tor client over a 10-minute interval,  thus we estimate $r_1$ is about $4$ in a 10-minute interval, assuming 
that legitimate Tor clients had the same usage pattern in 2013 as they have in 2016. 

However, the above usage pattern inferred from \cite{PrivCount} cannot be applied to determine $r_2$ since bot clients may 
have different usage patterns from legitimate clients. Thus, we estimate $r_2$ using historical data. In particular, we find that the highest 
circuit creation failure rate on September 27 2013 is about 35\%~\cite{hidden_service_abuse}. Then using the network consensus of the same day, 
$r_2$ is estimated at about $150$ over a 10-minute interval.

Based on the above mathematical modeling, we study the circuit creation failure rates using our Tor-scale simulator. 
The results are plotted in Figure~\ref{fig:circuit_failure_rate}. 

\end{document}